\documentclass[aps,pra,twocolumn,superscriptaddress,nofootinbib,longbibliography]{revtex4-2}
\pdfoutput=1
\usepackage[utf8]{inputenc}
\usepackage[T1]{fontenc}
\usepackage{amsmath,amssymb,amsthm,mathtools}
\usepackage{bm}
\usepackage{hyperref}
\usepackage{graphicx}
\usepackage{booktabs}
\usepackage{tikz}
\usepackage{caption}
\usetikzlibrary{arrows.meta,calc,positioning,fit,backgrounds,shapes.geometric}

\definecolor{BCink}{HTML}{22303C}
\definecolor{BCblue}{HTML}{1F5A96}
\definecolor{BCorange}{HTML}{C2670F}
\definecolor{BCgreen}{HTML}{1B7A46}
\definecolor{BCred}{HTML}{B3261E}

\graphicspath{{figures/}}

\newtheorem{theorem}{Theorem}

\newtheorem{proposition}[theorem]{Proposition}
\newtheorem{corollary}[theorem]{Corollary}
\theoremstyle{definition}
\newtheorem{definition}[theorem]{Definition}
\newtheorem{remark}[theorem]{Remark}

\newsavebox{\tabQ}
\newcommand{\Tr}{\operatorname{Tr}}
\newcommand{\CPO}{\mathrm{CPO}}
\newcommand{\CPTP}{\mathrm{CPTP}}
\newcommand{\SC}{\mathrm{SC}}
\newcommand{\sgn}{\operatorname{sgn}}
\newcommand{\diag}{\operatorname{diag}}

\newcommand{\rank}{\operatorname{rank}}
\newcommand{\range}{\operatorname{range}}

\begin{document}

\title{Pairwise detection, chiral conversion: Resource theory of quantum multistates}

\author{Yuan-Hong Tao}
\author{Yu-Xin Wu}
\affiliation{College of Science, Zhejiang University of Science and Technology, Hangzhou, Zhejiang 310023, China}

\author{Bing Xie}
\email{E-mail: xiebingjiangxi2023@163.com}
\affiliation{School of Mathematical Sciences, Hangzhou Dianzi
University, Hangzhou, Zhejiang 310018, China}
\author{Lin Zhang}
\email{Contact author: godyalin@163.com}
\affiliation{School of Mathematical Sciences, Hangzhou Dianzi
University, Hangzhou, Zhejiang 310018, China}

\date{\today}

\begin{abstract}
Basis-independent coherence of a quantum multistate is fixed by whether its constituent states commute, and for qubits it is detected by pairwise state overlaps alone. What governs its deterministic transformation has been open. We give the exact resource-conversion theory of quantum multistates under common commutativity-preserving operations (CPOs). For qubits, pairwise Bargmann invariants determine the singular values of the unique candidate channel, while a single third-order Bargmann phase determines its orientation. Complete positivity then yields a necessary and sufficient criterion for exact conversion. The criterion identifies a chirality-sensitive shell $\mathcal C_{\rm chir}$, within which targets with identical pairwise geometry can differ in convertibility solely by the sign of their cubic invariant. This yields an order separation, $k_{\rm detect}=2<3=k_{\rm convert}$, for qubit multistates of every size $n\geqslant3$. The chirality-sensitive shell occupies exactly one third of the singular-value simplex, while mirror reflection obstructs two thirds of the admissible orientation-preserving transformations by volume. For $d\geqslant3$, by contrast, the structure is rigid: every resourceful conversion is realized by either an isotropic depolarizing or a transpose-depolarizing channel. Convertibility is then decided by centered Bargmann invariants of degree at most $d^2$.
\end{abstract}

\maketitle

\section{Introduction}\label{sec:intro}

Quantum resource theories quantify non-classical features of quantum states such as entanglement, coherence, and asymmetry~\cite{Horodecki2007,Piani2014,Gour2018}. In the usual resource theory of coherence, coherence is defined relative to a designated reference basis~\cite{Baumgratz2014}. Distributed quantum protocols often have no shared reference frame, which makes basis-independent formulations the relevant ones. The term has two inequivalent meanings, referring either to a unitary-invariant quantity of a single state \cite{Jin2024} or to the relational notion used here, set coherence or multistate compatibility \cite{Li2017,Li2019,Mao2019,Designolle2021}. 
An ensemble is set-coherent exactly when its density operators cannot be simultaneously diagonalized, that is, when they fail to commute. For qubits, this is already determined by second-order pairwise data. In dimensions three and above, however, strictly higher-order Bargmann data are required ~\cite{Wang2026}.

This raises the question we address: \emph{if the presence of set coherence in a qubit ensemble is determined by pairwise overlaps alone, are pairwise invariants also enough to decide the transformation of one coherent ensemble into another?}

\begin{figure*}[t]
\includegraphics[width=\textwidth]{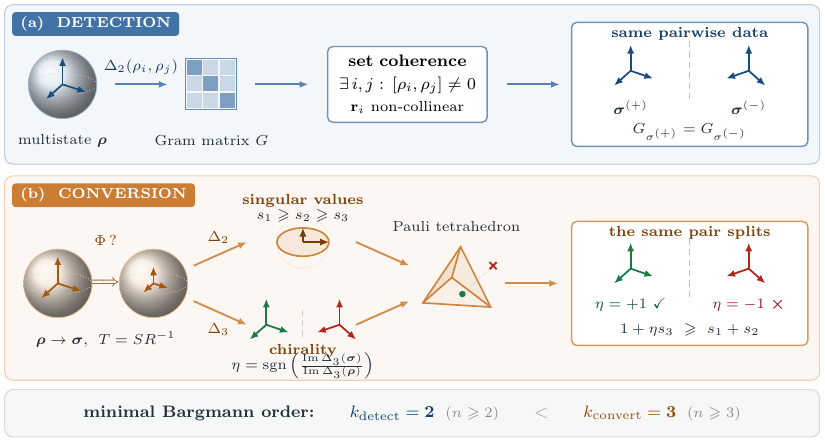}
\captionsetup{width=\textwidth}
\caption{(Color Online) Detection and conversion of set coherence read the multistate at different Bargmann orders. (a)~Detection uses only the pairwise invariants $\Delta_2(\rho_i,\rho_j)$, which determine the Gram matrix $G$. Set coherence is detected by $\operatorname{rank}(G)\geqslant2$. Mirror reflection preserves $G$, so a target and its mirror image have identical order-2 data. (b)~Conversion fixes the candidate channel $T=SR^{-1}$, whose singular values $s_1\geqslant s_2\geqslant s_3$ are determined by the order-2 invariants. Its orientation is instead fixed by the sign
$\eta=\operatorname{sgn}\Big(\frac{\operatorname{Im}\Delta_3(\boldsymbol{\sigma})}{\operatorname{Im}\Delta_3(\boldsymbol{\rho})}\Big)$ of two third-order invariants. Complete positivity places the signed spectrum $(s_1,s_2,\eta s_3)$ inside the Pauli tetrahedron, so the mirror pair that detection cannot separate splits into a realizable and an obstructed transformation. Detection is therefore settled at order $2$ for every $n \geqslant 2$, conversion at order $3$ for every $n \geqslant 3$.}\label{fig:paradigm_shift}
\end{figure*}

They are not. Take three pure qubit states whose Bloch vectors are mutually orthogonal and scale them by a contraction factor $c \in (\frac13, 1]$. The original configuration converts deterministically into its scaled copy, while its mirror image is out of reach of every commutativity-preserving operation. Yet the two targets have the same pairwise overlaps, fidelities, and purities. What pairwise data fail to capture, but complete positivity constrains, is orientation, given by the sign of a third-order Bargmann phase.

Bargmann invariants~\cite{Fuchs2003,Galvao2020,Oszmaniec2024} have been studied recently in quantum kinematics, multi-mode interferometry, and metric reconstruction~\cite{Li2025,Wagner2026a,Wagner2026b,Wang2026,Guo2026}, mostly as static geometric descriptors. Their role in synthesising a channel has had less attention. How relational coherence evolves under a channel is also tied to the trade-off between coherence and mixedness in open-system dynamics~\cite{Che2023}.

Here, we construct the exact resource-conversion theory of quantum multistates under common commutativity-preserving operations (CPOs). Our main contributions are:
\begin{itemize}
    \item \emph{Order separation ($k_{\mathrm{detect}} = 2 < 3 = k_{\mathrm{convert}}$).} Detection requires only second-order Bargmann invariants, whereas exact conversion requires third-order invariants (Fig.~\ref{fig:paradigm_shift}). The third-order requirement is tight for every ensemble size \(n\ge3\). Moreover, degree three is globally complete, as it determines a qubit multistate up to a common unitary. Therefore, no qubit conversion problem can require higher-order invariants. 
    \item \emph{The chirality-sensitive shell $\mathcal{C}_{\mathrm{chir}}$.} We derive a necessary and sufficient geometric criterion for exact qubit conversion based on the signed singular values of the candidate channel. This criterion characterizes \(C_{\rm chir}\) as the region where complete positivity distinguishes the two orientations. The shell occupies exactly one third of the singular-value simplex. By volume, two thirds of the transformations admissible in the orientation-preserving branch become non-convertible upon reflection. 
    \item \emph{The dimensional anomaly.} For $d \geqslant 3$ the anisotropic contractions available to a qubit disappear: every CPO with non-commutative range is an isotropic depolarizing or transpose-depolarizing channel~\cite{Hu2012,Guo2013}. On that classification we build a decision procedure, reading the branch and the contraction parameter off centered Bargmann invariants of degree at most $d^2$.
\end{itemize}

\section{Framework: Multistates and Commutativity-Preserving Operations}
\label{sec:framework}

Let $\mathcal{H} \cong \mathbb{C}^d$ be a finite-dimensional Hilbert space, and let $\mathcal{D}(\mathbb{C}^d)$ denote the compact convex set of density operators on $\mathcal{H}$.

An $n$-multistate is an ordered tuple of density operators $\boldsymbol{\rho} = (\rho_1, \ldots, \rho_n) \in \mathcal{D}(\mathbb{C}^d)^n$. It is free (set-incoherent) if there exists an orthonormal basis in which all its constituent states are simultaneously diagonal. The set of all such multistates is denoted by \(\mathcal F_{n,d}\). Equivalently,
\begin{equation*}
\boldsymbol{\rho} \in \mathcal{F}_{n,d} \iff [\rho_i, \rho_j] = 0, \quad \forall i, j \in \{1, \ldots, n\}.
\end{equation*}
Otherwise, \(\boldsymbol{\rho}\) is resourceful (set-coherent), meaning that at least one pair of its constituent states fails to commute.

We consider physical transformations of a multistate described by a common completely positive trace-preserving (CPTP) map $\Phi:\mathcal B(\mathcal H)\to\mathcal B(\mathcal H)$. Its action on the multistate is
$\Phi^{\times n}(\boldsymbol{\rho})
=\bigl(\Phi(\rho_1),\ldots,\Phi(\rho_n)\bigr)$.
The maximal resource-nongenerating class consists of those CPTP maps that take free
multistates to free multistates, namely
\[
\mathfrak F_{n,d}
=
\left\{
\Phi\in\mathrm{CPTP}
\,\middle|\,
\Phi^{\times n}(\mathcal F_{n,d})
\subseteq
\mathcal F_{n,d}
\right\}.
\]

Since freeness is characterized by pairwise commutativity,
we introduce the commutativity-preserving operations (CPOS)
\[
\CPO_d :=
\left\{
\Phi\in\CPTP \;\middle|\;
\begin{array}{l}
    [\rho,\tau]=0 \Longrightarrow
    [\Phi(\rho),\Phi(\tau)]=0\\
    \text{for all }\rho,\tau\in\mathcal D(\mathbb C^d)
\end{array}
\right\}.
\]
For every $n\geqslant2$, we have
$\widetilde{\mathfrak F}_{n,d}=\CPO_d$.
One inclusion is immediate, and the reverse follows by
applying the free-preservation condition to
$(\rho,\tau,\rho,\ldots,\rho)$ for any commuting pair
$\rho,\tau$.

Exact conversion of $\boldsymbol{\rho}$ into
$\boldsymbol{\sigma}$ by a common free channel,
denoted by $\boldsymbol{\rho}\succeq_{\SC}\boldsymbol{\sigma}$,
is equivalent to the existence of $\Phi\in\CPO_d$ satisfying
$\Phi(\rho_i)=\sigma_i$ for all $i$.
This relation is a preorder because $\CPO_d$ contains
the identity channel and is closed under composition.

For unconstrained quantum channels, exact conversion
between prescribed states is a settled question.
The Alberti--Uhlmann criterion and its extensions decide
it for dichotomies and beyond~\cite{Dall2020}, and quantum
majorization decides it for arbitrary families~\cite{Gour2018}.
Here, the channel is restricted to the free class $\CPO_d$,
and this restriction makes orientation relevant to
convertibility.

When the target multistate is free, conversion reduces to a POVM feasibility problem, regardless of the dimension
or the source multistate.

\begin{proposition}[POVM criterion for conversion to free multistates]
\label{prop:free_target}
Let $\boldsymbol{\rho} \in \mathcal{D}(\mathbb{C}^d)^n$ and $\boldsymbol{\sigma} \in \mathcal{F}_{n,d}$, with $\sigma_i = \sum_a q_{a|i}\,|a\rangle\langle a|$ in a common eigenbasis $\{|a\rangle\}_{a=1}^d$. Then $\boldsymbol{\rho} \succeq_{\SC} \boldsymbol{\sigma}$ if and only if there is a POVM $\{M_a\}_{a=1}^d$ on $\mathcal{H}$ with
\begin{equation}
\label{eq:free_target}
\Tr(M_a \rho_i) = q_{a|i} \qquad \text{for all } a, i.
\end{equation}
\end{proposition}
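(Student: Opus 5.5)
The plan is to prove both directions by explicit construction. The sufficiency direction builds a measure-and-prepare channel. The necessity direction extracts a POVM from the given CPO by composing it with the dephasing in the common eigenbasis $\{|a\rangle\}$.

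\emph{Sufficiency.} Suppose a POVM $\{M_a\}_{a=1}^d$ satisfies \eqref{eq:free_target}. I define
\[
\Phi(X)=\sum_{a=1}^d \Tr(M_a X)\,|a\rangle\langle a|.
\]
This is an entanglement-breaking channel, so it is completely positive. It is trace preserving because $\sum_a M_a=\mathbb 1$. Every output of $\Phi$ is diagonal in the fixed basis $\{|a\rangle\}$, so any two outputs commute, whether or not the inputs commute. Hence $\Phi\in\CPO_d$. Finally, \eqref{eq:free_target} gives $\Phi(\rho_i)=\sum_a q_{a|i}|a\rangle\langle a|=\sigma_i$ for every $i$, so $\boldsymbol\rho\succeq_{\SC}\boldsymbol\sigma$.

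\emph{Necessity.} Suppose $\Phi\in\CPO_d$ satisfies $\Phi(\rho_i)=\sigma_i$ for all $i$. I set $M_a:=\Phi^\dagger(|a\rangle\langle a|)$, where $\Phi^\dagger$ is the Heisenberg-picture dual of $\Phi$. Because $\Phi$ is completely positive, $\Phi^\dagger$ is positive, so each $M_a\geqslant0$. Because $\Phi$ is trace preserving, $\Phi^\dagger$ is unital, so $\sum_a M_a=\Phi^\dagger(\mathbb 1)=\mathbb 1$. Then
\[
\Tr(M_a\rho_i)=\Tr\bigl(|a\rangle\langle a|\,\Phi(\rho_i)\bigr)=\langle a|\sigma_i|a\rangle=q_{a|i}.
\]
This direction does not use the commutativity-preserving property at all; any CPTP map would do.

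The only subtle point is bookkeeping. The eigenbasis $\{|a\rangle\}$ must be fixed as in the statement, and the identity $\langle a|\sigma_i|a\rangle=q_{a|i}$ holds precisely because $\sigma_i$ is diagonal in that basis. Beyond this, I expect no real obstacle: in the sufficiency direction, commutativity preservation is automatic because the range of $\Phi$ consists of mutually commuting operators.
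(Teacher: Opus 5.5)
Your proof is correct and is essentially the paper's own argument: sufficiency uses the measure-and-prepare channel whose outputs are all diagonal in $\{|a\rangle\}$, and necessity uses $M_a=\Phi^\dagger(|a\rangle\langle a|)$. The only cosmetic difference is in sufficiency, where you check membership in $\CPO_d$ directly from the commuting range, whereas the paper argues through the identity $\mathfrak{F}_{n,d}=\CPO_d$.
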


\begin{proof}
Given such a POVM, the measure-and-prepare map $\Phi(X) = \sum_a \Tr(M_a X)\,|a\rangle\langle a|$ is completely positive and trace-preserving, and its image lies in the abelian algebra of operators diagonal in $\{|a\rangle\}$. It therefore sends every multistate to a free one, so $\Phi \in \mathfrak{F}_{n,d} = \CPO_d$, and $\Phi(\rho_i) = \sigma_i$ by~\eqref{eq:free_target}.

Conversely, let $\Phi \in \CPO_d$ satisfy $\Phi(\rho_i) = \sigma_i$. Set $M_a = \Phi^\dagger(|a\rangle\langle a|)$, where the adjoint is taken in the Hilbert--Schmidt inner product. Since $\Phi$ is completely positive, its adjoint
$\Phi^\dagger$ is also completely positive, giving
$M_a=\Phi^\dagger(|a\rangle\langle a|)\geqslant0$.
Moreover, trace preservation of $\Phi$ implies
$\Phi^\dagger(\mathbb I)=\mathbb I$, so
\[
\sum_a M_a = \Phi^\dagger\!\left(\sum_a|a\rangle\langle a|\right) = \Phi^\dagger(\mathbb I) = \mathbb I.
\]
Thus, $\{M_a\}$ is a POVM. Finally, $\Tr(M_a \rho_i) = \Tr(|a\rangle\langle a|\,\Phi(\rho_i)) = \langle a|\sigma_i|a\rangle = q_{a|i}$.
\end{proof}

The proposition reduces conversion to a free target to a POVM feasibility problem. We now turn to resourceful targets. A $\CPO_d$ realizing such a target cannot have commuting range, since otherwise its outputs would form a free multistate.


\section{Exact qubit conversion and the chirality barrier}\label{sec:qubit_conversion}

\subsection{Bloch space representation and qubit CPOs}

For a single qubit ($d = 2$), any density operator decomposes in the Pauli basis $\boldsymbol{\sigma} = (\sigma_x, \sigma_y, \sigma_z)$ as $\rho = \frac{1}{2}(\mathbb{I} + \mathbf{r} \cdot \boldsymbol{\sigma})$, where $\mathbf{r} \in \mathbb{R}^3$ satisfies $\|\mathbf{r}\| \leqslant 1$. Two states $\rho, \tau$ commute if and only if their Bloch vectors are collinear. Set coherence of $\boldsymbol{\rho} = (\rho_1, \ldots, \rho_n)$ is therefore equivalent to non-collinearity of its Bloch vectors $\{\mathbf{r}_i\}_{i=1}^n$.

\begin{theorem}[Classification of qubit CPOs~\cite{Hu2012,Guo2013}]
\label{thm:qubit_cpo_classification}
A qubit CPTP channel $\Phi$ is commutativity-preserving if and only if it satisfies at least one of the following two conditions:
\begin{enumerate}
    \item $\Phi$ is unital, $\Phi(\mathbb{I}) = \mathbb{I}$. In Bloch coordinates, it acts as a real linear map: $\mathbf{r} \mapsto T \mathbf{r}$, with $T \in \mathbb{R}^{3\times3}$.
    \item $\Phi$ has commuting range: $[\Phi(\rho), \Phi(\tau)] = 0$ for all $\rho, \tau \in \mathcal{D}(\mathbb{C}^2)$. In the affine Bloch form $\mathbf{r} \mapsto T\mathbf{r} + \mathbf{c}$ this says that $\range(T)$ and $\mathbf{c}$ lie on one fixed line, confining every output to a single commutative line in Bloch space.
\end{enumerate}
\end{theorem}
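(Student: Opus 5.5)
We work in the affine Bloch representation: every qubit CPTP map acts as $\mathbf r\mapsto T\mathbf r+\mathbf c$. Commutation becomes collinearity: $[\rho,\tau]=0$ iff $\mathbf r_\rho\parallel\mathbf r_\tau$, meaning one is a real multiple of the other, with the zero vector allowed. So $\Phi$ is commutativity-preserving exactly when, for every unit vector $\mathbf u$ and every $a,b\in[-1,1]$, the outputs $aT\mathbf u+\mathbf c$ and $bT\mathbf u+\mathbf c$ are collinear.

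\textbf{Sufficiency.} Suppose first that $\Phi$ is unital. Then $\mathbf c=0$, and $aT\mathbf u$ and $bT\mathbf u$ are both multiples of $T\mathbf u$, so they are collinear. Suppose instead that $\Phi$ has commuting range. Then all outputs lie on one line through the origin, so any two output Bloch vectors are collinear. In both cases commuting inputs give commuting outputs.

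\textbf{Necessity.} Assume $\Phi$ is commutativity-preserving and $\mathbf c\neq0$; we show that $\range(T)\subseteq\mathbb R\mathbf c$. Fix a unit vector $\mathbf u$ and set $\mathbf v=T\mathbf u$.
\begin{itemize}
\item Take $a=0$ and $b=1$. The input $\mathbb I/2$ commutes with everything, so $\mathbf c$ must be collinear with $\mathbf v+\mathbf c$.
\item Since $\mathbf c\neq0$, collinearity gives $\mathbf v+\mathbf c=\lambda\mathbf c$ for some real $\lambda$.
\item Hence $\mathbf v=(\lambda-1)\mathbf c\in\mathbb R\mathbf c$.
\end{itemize}
As $\mathbf u$ was arbitrary, $\range(T)\subseteq\mathbb R\mathbf c$. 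Together with $\mathbf c\in\mathbb R\mathbf c$, every output lies on the line $\mathbb R\mathbf c$. This is condition~2, and the range commutes.

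If instead $\mathbf c=0$, then $\Phi(\mathbb I/2)=\mathbb I/2$, so $\Phi$ is unital and its Bloch action is the linear map $\mathbf r\mapsto T\mathbf r$. This is condition~1.

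Surprisingly, no step is hard. The only point requiring care is the choice of test pair in the necessity argument: using the maximally mixed state as one element of the commuting pair is what pins $\range(T)$ to the line $\mathbb R\mathbf c$. One also needs the minor observation that the trivial case $T\mathbf u=0$ is covered by the same argument.
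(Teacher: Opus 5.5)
Your proof is correct. The paper gives no proof of this statement. It imports the classification from~\cite{Hu2012,Guo2013} and uses only its consequence that a conversion between resourceful qubit multistates must be unital. Your argument replaces the citation with a self-contained derivation in Bloch space.

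It also shows slightly more than the statement asserts:
\begin{itemize}
\item The necessity direction uses only commuting pairs of the form $(\mathbb{I}/2,\rho)$. It therefore shows that a qubit channel is commutativity-preserving if and only if $\Phi(\mathbb{I})$ commutes with every output.
\item It never uses complete positivity, so the same dichotomy holds for every positive trace-preserving qubit map.
\end{itemize}

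If you want the gloss in item~2 verified in full, including the case $\mathbf{c}=0$, the same differencing step does it. The step is $T\mathbf{u}=(T\mathbf{u}+\mathbf{c})-\mathbf{c}$, applied to a set of pairwise collinear outputs; it shows that commuting range is equivalent to $\range(T)$ and $\mathbf{c}$ lying on a common line.

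What the citation gives the paper instead is uniformity with Theorem~\ref{thm:rigidity}. Your argument depends on a qubit-specific fact: commutation is linear dependence of Bloch vectors. That fact has no analogue for $d\geqslant3$, where the isotropic rigidity needs the substantially harder arguments of the same references.
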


Because a channel with commuting range maps every multistate to a free (set-incoherent) one, it cannot realize a resourceful target. Hence, for qubits, Theorem~\ref{thm:qubit_cpo_classification} implies that any CPO converting a resourceful multistate $\boldsymbol{\rho}$ into another resourceful multistate $\boldsymbol{\sigma}$ must be unital. Consequently, any realizing channel acts on the Bloch vectors as $\mathbf{s}_i = T\mathbf{r}_i$ for some real matrix $T\in\mathbb R^{3\times3}$ and all $i\in\{1,\ldots,n\}$.

\begin{figure*}[t]
\includegraphics[width=\textwidth]{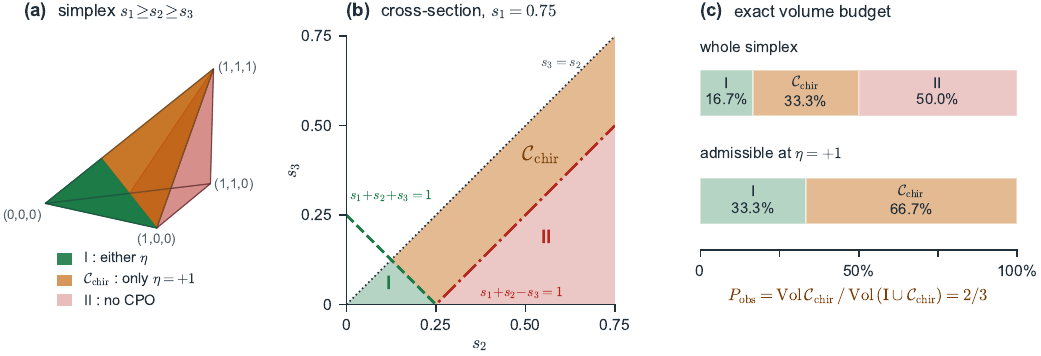}
\caption{The chirality-sensitive shell $\mathcal{C}_{\mathrm{chir}}$
    inside the ordered singular-value simplex.
    (a)~The planes $s_1+s_2+s_3=1$ and $s_1+s_2-s_3=1$
    partition the simplex into three regions.
    Region~I is admissible for both orientations,
    the shell $\mathcal{C}_{\mathrm{chir}}$ (Region~III) is admissible only for $\eta=+1$, and Region~II is inadmissible for either orientation.
    (b)~The cross-section at $s_1=0.75$ shows the shell between the two boundary lines, bounded above by $s_3=s_2$.
    (c)~Exact volume fractions of the three regions. The shell occupies two thirds of the volume admissible for $\eta=+1$. Throughout this shell, reflection reverses the orientation   and violates complete positivity.}
\label{fig:chirality_shell}
\end{figure*}

\subsection{Bargmann invariants and candidate channel synthesis}

Let $\boldsymbol{\rho} = (\rho_1, \rho_2, \rho_3)$ have non-singular Bloch matrix $R = [\mathbf{r}_1, \mathbf{r}_2, \mathbf{r}_3]$, and let $\boldsymbol{\sigma} = (\sigma_1, \sigma_2, \sigma_3)$ be any triple with Bloch matrix $S = [\mathbf{s}_1, \mathbf{s}_2, \mathbf{s}_3]$. Let $\Phi_T$ denote the unital map whose Bloch action is $T$. The candidate map $\Phi_T$ is uniquely fixed by $T = S R^{-1}$, whether or not $S$ is invertible. Convertibility then reduces to whether this candidate unital map is completely positive, and hence a valid CPO.

We define the $k$-th order Bargmann invariant for a sequence of quantum states as
\begin{equation*}
\Delta_k(\rho_1, \ldots, \rho_k) = \Tr(\rho_1 \rho_2 \cdots \rho_k).
\end{equation*}
For qubits, the $k=2$ case provides the Hilbert-Schmidt inner product
\begin{equation*}
\Tr(\rho_i \rho_j) = \tfrac{1}{2}(1 + \mathbf{r}_i \cdot \mathbf{r}_j) \implies \mathbf{r}_i \cdot \mathbf{r}_j = 2\Tr(\rho_i \rho_j) - 1.
\end{equation*}
Thus, the Gram matrices $G_{\boldsymbol{\rho}} = R^\mathsf{T} R$ and $G_{\boldsymbol{\sigma}} = S^\mathsf{T} S$ are expressed directly through pairwise Bargmann invariants by
\begin{align*}
[G_{\boldsymbol{\rho}}]_{ij} &= 2\Delta_2(\rho_i, \rho_j) - 1, \\
[G_{\boldsymbol{\sigma}}]_{ij} &= 2\Delta_2(\sigma_i, \sigma_j) - 1.
\end{align*}

\begin{theorem}[Spectral and orientation identification]
\label{thm:gram_svd}
Let $T = S R^{-1}$. Its singular values $s_1 \geqslant s_2 \geqslant s_3 \geqslant 0$, and its determinant sign $\eta = \sgn\det(T)$ when $T$ is invertible, are determined by relational invariants:
\begin{enumerate}
    \item The squared singular values $\{s_k^2\}_{k=1}^3$ are the generalized eigenvalues of the Bargmann Gram pencil, given by the roots of
    \begin{equation*}
    \det(G_{\boldsymbol{\sigma}} - \mu G_{\boldsymbol{\rho}}) = 0.
    \end{equation*}
    \item If $\det (S) \neq 0$, the orientation $\eta \in \{+1, -1\}$ is the ratio of the imaginary parts of the third-order Bargmann invariants, given by
    \begin{equation*}
    \eta = \sgn\left( \frac{\operatorname{Im}\Delta_3(\sigma_1, \sigma_2, \sigma_3)}{\operatorname{Im}\Delta_3(\rho_1, \rho_2, \rho_3)} \right).
    \end{equation*}
    If $\det (S) = 0$, then $s_3 = 0$, this ratio vanishes, and we set $\eta = +1$ by convention.
\end{enumerate}
\end{theorem}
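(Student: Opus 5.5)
The plan is to read both parts off two elementary identities, one for the Gram pencil and one for third-order invariants of qubits.

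\emph{Part 1.} Since $T=SR^{-1}$ with $R$ invertible, $T^{\mathsf T}T=R^{-\mathsf T}S^{\mathsf T}SR^{-1}=R^{-\mathsf T}G_{\boldsymbol\sigma}R^{-1}$. The squared singular values $s_k^2$ are the eigenvalues of $T^{\mathsf T}T$. They are therefore the roots of $\det(R^{-\mathsf T}G_{\boldsymbol\sigma}R^{-1}-\mu\mathbb I)=0$. Multiplying on the left by $R^{\mathsf T}$ and on the right by $R$ multiplies the determinant by $\det(R)^2\neq0$. This turns the equation into $\det(G_{\boldsymbol\sigma}-\mu R^{\mathsf T}R)=\det(G_{\boldsymbol\sigma}-\mu G_{\boldsymbol\rho})=0$, and the multiplicities are preserved. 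Since $G_{\boldsymbol\rho}$ and $G_{\boldsymbol\sigma}$ are expressed entrywise through $\Delta_2$, the first claim follows.

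\emph{Part 2.} Expand the product of three qubit states in the Pauli basis. We use $(\mathbf a\cdot\boldsymbol\sigma)(\mathbf b\cdot\boldsymbol\sigma)=(\mathbf a\cdot\mathbf b)\mathbb I+i(\mathbf a\times\mathbf b)\cdot\boldsymbol\sigma$ together with $\Tr\sigma_k=0$ and $\Tr(\sigma_j\sigma_k)=2\delta_{jk}$. This gives
\[
\Tr(\rho_1\rho_2\rho_3)=\tfrac14\bigl(1+\mathbf r_1\cdot\mathbf r_2+\mathbf r_2\cdot\mathbf r_3+\mathbf r_1\cdot\mathbf r_3+i\,\mathbf r_1\cdot(\mathbf r_2\times\mathbf r_3)\bigr),
\]
so $\operatorname{Im}\Delta_3(\boldsymbol\rho)=\tfrac14\det R$ and likewise $\operatorname{Im}\Delta_3(\boldsymbol\sigma)=\tfrac14\det S$. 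Since $\det T=\det S/\det R$ and $\det R\neq0$, when $\det S\neq0$ the ratio $\operatorname{Im}\Delta_3(\boldsymbol\sigma)/\operatorname{Im}\Delta_3(\boldsymbol\rho)$ equals $\det T$. Its sign is therefore $\eta$. If instead $\det S=0$, then $\det T=0$, so $\prod_k s_k=|\det T|=0$ forces $s_3=0$, and the numerator, hence the ratio, vanishes. The convention $\eta=+1$ is then just a definition.

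The only real computation is the trace expansion, specifically tracking the factor $i$ from the cross-product term. Keeping it correct matters, because it is what gives the sign of $\det R$. All remaining steps are linear-algebra bookkeeping. One point needs a remark: multiplicities of the $s_k^2$ are correctly reproduced by the pencil roots, which holds because the congruence by $R$ is invertible.
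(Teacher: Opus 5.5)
Your proposal is correct and follows essentially the same route as the paper. Part 1 rests on the congruence $G_{\boldsymbol\sigma}-\mu G_{\boldsymbol\rho}=R^{\mathsf T}(T^{\mathsf T}T-\mu\mathbb I)R$, and part 2 combines the Pauli expansion giving $\operatorname{Im}\Delta_3=\tfrac14\det R$ with $\det S=\det T\,\det R$; your remarks on preserved multiplicities and on $s_3=0$ when $\det S=0$ spell out details the paper leaves implicit.
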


\begin{proof}
Since \(S=TR\), we have
$G_{\boldsymbol{\sigma}}=S^{T}S=R^{T}T^{T}TR$. Therefore, $G_{\boldsymbol{\sigma}}-\mu G_{\boldsymbol{\rho}}
=R^{T}(T^{T}T-\mu I)R$, and hence
\[
\det(G_{\boldsymbol{\sigma}}-\mu G_{\boldsymbol{\rho}})
=\det(R)^2\det(T^{T}T-\mu I).
\]Since \(R\) is nonsingular, the roots of $\det(G_{\boldsymbol{\sigma}}-\mu G_{\boldsymbol{\rho}})=0$ are precisely the eigenvalues of \(T^{T}T\), namely $s_1^2,\ s_2^2,\ s_3^2$. This proves item 1.

For item 2, the identity for Pauli matrices yields
\begin{equation*}
\Tr(\rho_1 \rho_2 \rho_3) = \frac{1}{4} + \frac{1}{4}\sum_{i<j}\mathbf{r}_i\cdot\mathbf{r}_j + \frac{\mathrm{i}}{4}(\mathbf{r}_1 \times \mathbf{r}_2) \cdot \mathbf{r}_3.
\end{equation*}
Therefore, $\operatorname{Im}\Delta_3(\rho_1, \rho_2, \rho_3) = \frac{1}{4}\det(R)$. Since $S = T R$, we have $\det(S) = \det(T)\det(R)$, and taking signs yields the result.
\end{proof}

\begin{figure*}[t]
\includegraphics[width=\textwidth]{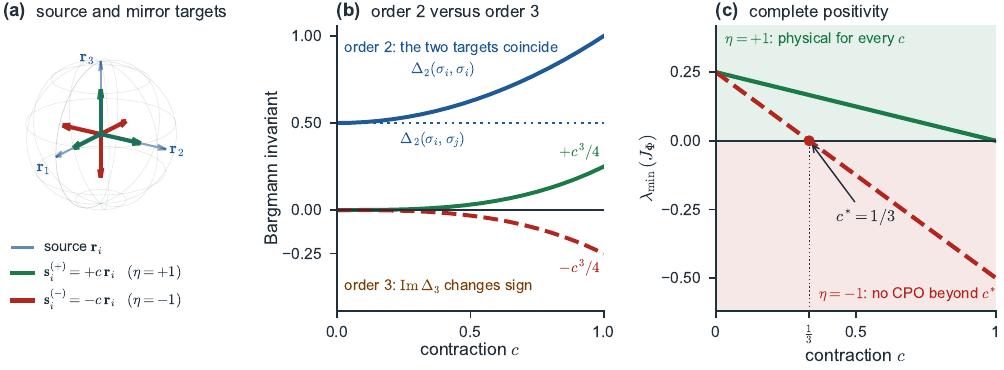}
\caption{The mirror counterexample. (a)~Three orthogonal Bloch vectors $\mathbf{r}_i$ are mapped either onto the contracted copy $\mathbf{s}^{(+)}_i=c\,\mathbf{r}_i$ or onto its reflection $\mathbf{s}^{(-)}_i=-c\,\mathbf{r}_i$. (b)~Every order-2 invariant of the two targets coincides: the cross terms stay at $\Delta_2(\sigma_i,\sigma_j)=\frac12$ and the purities at $\frac{1+c^2}2$. The order-3 invariant separates them, $\operatorname{Im}\Delta_3=\pm \frac{c^3}4$. (c)~The minimum Choi eigenvalue of the corresponding candidate channel. The orientation-preserving branch stays positive for every $c$, whereas the reflected branch crosses zero at $c^*=\frac13$ and is obstructed beyond it.}
\label{fig:mirror_counterexample}
\end{figure*}

\subsection{Exact conversion criterion and the chirality shell}

Any real matrix $T \in \mathbb{R}^{3\times3}$ admits a signed singular value decomposition~\cite{Fujiwara1999}
\begin{equation*}
T = O_1 \diag(s_1, s_2, \eta s_3) O_2^{\mathsf{T}}, \quad O_1, O_2 \in \mathrm{SO}(3),
\end{equation*}
where $s_1 \geqslant s_2 \geqslant s_3 \geqslant 0$ and $\eta = \sgn\det(T) \in \{+1, -1\}$. Because $\mathrm{SO}(3)$ corresponds to single-qubit unitary operations, there exist unitary channels $\mathcal U_1,~\mathcal U_2$ such that $\Phi_T=\mathcal U_1\circ\Phi_{\diag(\boldsymbol{\lambda})}\circ \mathcal U_2$, $\boldsymbol{\lambda} = (\lambda_1, \lambda_2, \lambda_3) = (s_1, s_2, \eta s_3)$. Therefore, the complete positivity of $\Phi_T$ is entirely determined by the signed singular values of $T$.

The normalized Choi matrix of the diagonal unital map $\Phi_{\diag(\boldsymbol{\lambda})}$ is diagonal in the Bell basis~\cite{King2001,Ruskai2002}, with eigenvalues
\begin{equation}
\label{eq:choi_spectrum}
\begin{aligned}
p_0 &= \tfrac{1}{4}(1 + \lambda_1 + \lambda_2 + \lambda_3), \\
p_1 &= \tfrac{1}{4}(1 + \lambda_1 - \lambda_2 - \lambda_3), \\
p_2 &= \tfrac{1}{4}(1 - \lambda_1 + \lambda_2 - \lambda_3), \\
p_3 &= \tfrac{1}{4}(1 - \lambda_1 - \lambda_2 + \lambda_3).
\end{aligned}
\end{equation}
Thus, the map is completely positive if and only if $p_i\geq 0$ for all $i$, i.e. $\boldsymbol{\lambda}$ lies within the Pauli tetrahedron defined by
\begin{equation}
\label{eq:pauli_tetrahedron}
1 + \lambda_3 \geqslant |\lambda_1 + \lambda_2|
\quad \text{and} \quad
1 - \lambda_3 \geqslant |\lambda_1 - \lambda_2|.
\end{equation}
Substituting $\boldsymbol{\lambda}=(s_1,s_2,\eta s_3)$ and using $s_1\geqslant s_2\geqslant s_3\geqslant0$, we obtain the following exact conversion criterion.

\begin{theorem}[Exact qubit state conversion]
\label{thm:qubit_exact}
Let $\boldsymbol{\rho}$ be a qubit triple with non-singular Bloch matrix $R$, and let $\boldsymbol{\sigma}$ be any resourceful triple. Write $s_1 \geqslant s_2 \geqslant s_3 \geqslant 0$ and $\eta$ for the singular values and orientation of $T = S R^{-1}$. Then $\boldsymbol{\rho} \succeq_{\SC} \boldsymbol{\sigma}$ if and only if
\begin{equation}
\label{eq:qubit_master}
1 + \eta s_3 \geqslant s_1 + s_2 \quad \text{and} \quad 1 - \eta s_3 \geqslant s_1 - s_2.
\end{equation}
\end{theorem}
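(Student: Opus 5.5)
The plan is to reduce convertibility to complete positivity of a single candidate map, and then read that condition off the Pauli tetrahedron \eqref{eq:pauli_tetrahedron}.

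\emph{Step 1: only one map can work.} Because $\boldsymbol{\sigma}$ is resourceful, Theorem~\ref{thm:qubit_cpo_classification} and the remark after it show that any CPO with $\Phi(\rho_i)=\sigma_i$ must be unital. Such a map acts on Bloch vectors by some $T'$ with $T'R=S$. Since $R$ is invertible, $T'=SR^{-1}=T$. Hence $\boldsymbol{\rho}\succeq_{\SC}\boldsymbol{\sigma}$ holds if and only if the unital map $\Phi_T$ is completely positive. For sufficiency we also need that a CPTP unital $\Phi_T$ lies in $\CPO_2$; this follows from the "if" direction of Theorem~\ref{thm:qubit_cpo_classification}. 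Concretely, a linear $T$ sends collinear Bloch vectors to collinear ones, so commuting pairs stay commuting. Trace preservation holds automatically in the Bloch parametrization.

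\emph{Step 2: reduce to the signed singular values.} Take the signed SVD $T=O_1\diag(s_1,s_2,\eta s_3)O_2^{\mathsf T}$ with $O_1,O_2\in\mathrm{SO}(3)$. Then $\Phi_T=\mathcal U_1\circ\Phi_{\diag(\boldsymbol\lambda)}\circ\mathcal U_2$, and composing with unitary channels preserves complete positivity in both directions. So $\Phi_T$ is CP if and only if $\boldsymbol\lambda=(s_1,s_2,\eta s_3)$ satisfies \eqref{eq:pauli_tetrahedron}, i.e.\ if and only if all $p_k$ in \eqref{eq:choi_spectrum} are nonnegative. When $\det T=0$ we have $s_3=0$, so the sign of $\eta$ plays no role and the convention $\eta=+1$ is harmless.

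\emph{Step 3: remove the redundant inequalities.} Substitute $\lambda_3=\eta s_3$ and use $s_1\ge s_2\ge s_3\ge0$. Since $s_1+s_2\ge0$ and $s_1-s_2\ge0$, the absolute values open as $|\lambda_1+\lambda_2|=s_1+s_2$ and $|\lambda_1-\lambda_2|=s_1-s_2$. The tetrahedron then becomes
\[
1+\eta s_3\ge s_1+s_2,\qquad 1-\eta s_3\ge s_1-s_2,
\]
which is exactly \eqref{eq:qubit_master}.

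It is worth checking directly that the other two inequalities hidden in the absolute values, namely $p_2\ge0$ and $p_3\ge0$, are implied. The condition $p_3\ge0$ reads $1+\eta s_3\ge s_1+s_2$ and is the first condition itself. The condition $p_2\ge0$ reads $1-\eta s_3\ge s_1-s_2$ and is the second. The remaining conditions $p_0\ge0$ and $p_1\ge0$ read
\[
1+s_1+s_2+\eta s_3\ge0,\qquad 1+s_1-s_2-\eta s_3\ge0,
\]
and both hold automatically because $s_1\ge s_2\ge s_3$.

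The main obstacle I expect is Step 1 in the sufficiency direction. One has to make sure the "if" part of the classification really applies, so that every CPTP unital qubit map is commutativity-preserving, and that CPTP is exactly CP for these maps. The collinearity argument above settles it. Everything else is bookkeeping with signs.
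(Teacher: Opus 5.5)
Your proposal is correct and follows the paper's own derivation: resourcefulness of $\boldsymbol{\sigma}$ forces the unique unital candidate $T=SR^{-1}$, the signed SVD reduces complete positivity to $\boldsymbol{\lambda}=(s_1,s_2,\eta s_3)$, and the absolute values in the Pauli-tetrahedron inequalities~\eqref{eq:pauli_tetrahedron} open using $s_1\geqslant s_2\geqslant 0$. One slip in your supplementary check: for $\eta=+1$ the inequality $p_1\geqslant0$, i.e.\ $1+s_1-s_2-s_3\geqslant0$, does not follow from the ordering alone, because the singular values of $SR^{-1}$ can exceed $1$ (e.g.\ $R=\tfrac{1}{10}\mathbb{I}_3$ and $S=\mathbb{I}_3$ give $T=10\,\mathbb{I}_3$); it does follow from your second condition together with $s_1\geqslant s_2$, which your absolute-value argument already provides, so the proof itself is unaffected.
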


Within the ordered singular-value simplex
$\mathcal{S}=\{(s_1,s_2,s_3):1\geqslant s_1\geqslant s_2
\geqslant s_3\geqslant0\}$, condition~\eqref{eq:qubit_master}
partitions $\mathcal{S}$ into three geometrically distinct regions
according to whether complete positivity holds for each orientation
(Fig.~\ref{fig:chirality_shell}).
In Region~I, where $s_1+s_2+s_3\leqslant1$, the candidate maps
are completely positive for both orientations.
In Region~II, where $s_1+s_2-s_3>1$, neither candidate map
is completely positive, so neither defines a quantum channel.

The remaining region,
\[
\mathcal{C}_{\mathrm{chir}}
=
\left\{(s_1,s_2,s_3)\in\mathcal{S}:
s_1+s_2-s_3\leqslant1<s_1+s_2+s_3
\right\},
\]
is the chirality-sensitive shell (Region~III).
Here, the candidate map with $\eta=+1$ is completely positive,
whereas its mirror-reflected counterpart with $\eta=-1$ is not.
Thus, within this shell, orientation alone distinguishes
realizable from forbidden conversions with identical
singular values.

\subsection{The mirror counterexample and the chirality threshold}
\label{sec:mirror}

The operational necessity of the third-order invariant is demonstrated by an explicit family of state configurations. Let the source multistate consist of three pure qubit states
whose Bloch vectors are
\begin{equation*}
\mathbf{r}_1 = (1, 0, 0)^{\mathsf{T}}, \quad \mathbf{r}_2 = (0, 1, 0)^{\mathsf{T}}, \quad \mathbf{r}_3 = (0, 0, 1)^{\mathsf{T}}.
\end{equation*}
For a contraction factor $c\in[0,1]$, define two target multistates $\boldsymbol{\sigma}^{(+)}$ and $\boldsymbol{\sigma}^{(-)}$ by
$$\mathbf{s}_i^{(+)}=c\mathbf{r}_i,
\qquad
\mathbf{s}_i^{(-)}=-c\mathbf{r}_i,
\qquad i=1,2,3.$$
Both targets share identical pairwise Gram matrices
\begin{equation*}
G_{\boldsymbol{\sigma}^{(+)}} = G_{\boldsymbol{\sigma}^{(-)}} = c^2 \mathbb{I}_3.
\end{equation*}
The two targets have identical pairwise fidelities, purities, and second-order Bargmann invariants, so these data alone cannot distinguish them. Since $T_\pm=\pm c\mathbb{I}_3$, both candidate maps $\Phi_{T_{\pm}}$ have the same singular values $s_1=s_2=s_3=c$.

Substituting $\boldsymbol{\lambda}=(c,c,\eta c)$ into \eqref{eq:choi_spectrum} gives the minimum Choi eigenvalue
\begin{equation*}
\lambda_{\min}(J_{\Phi_{\diag(\boldsymbol{\lambda})}}) = \begin{cases}
\frac{1 - c}{4}, & \eta = +1, \\
\frac{1 - 3c}{4}, & \eta = -1.
\end{cases}
\end{equation*}
As illustrated in Fig.~\ref{fig:mirror_counterexample},
the candidate map for $\eta=+1$ is completely
positive for every $c\in[0,1]$, so
$\boldsymbol{\rho}\succeq_{\SC}\boldsymbol{\sigma}^{(+)}$ throughout this interval.
For $\eta=-1$, the minimum Choi eigenvalue
vanishes at the critical contraction $c^*=\frac{1}{3}$ and becomes negative for $c>c^*$.
Thus, no CPO can convert $\boldsymbol{\rho}$ into
$\boldsymbol{\sigma}^{(-)}$ for $c\in(1/3,1]$.

For any fixed $c\in(1/3,1]$, the two targets have identical second-order Bargmann data, yet only $\boldsymbol{\sigma}^{(+)}$ is reachable from $\boldsymbol{\rho}$ under CPOs. Their opposite orientations are encoded in
$$\operatorname{Im}\Delta_3(\boldsymbol{\sigma}^{(\pm)})=\pm\frac{c^3}{4}.$$
Thus, distinguishing these two conversion outcomes requires Bargmann data of at least third order.

\begin{remark}[Source of the obstruction]
\label{rem:nonunital}
The barrier at $c^*$ is resource-theoretic, not a failure of complete positivity alone. Drop the free-operation constraint and consider the affine Bloch map $\mathbf{r} \mapsto A\mathbf{r} + \mathbf{b}$ with
\begin{equation*}
A = -\tfrac{1}{2} \mathbb{I}_3 + \tfrac{1}{4}\mathbf{e}\mathbf{e}^\mathsf{T}, \qquad \mathbf{b} = -\tfrac{1}{4}\mathbf{e},
\end{equation*}
where $\mathbf{e} = (1,1,1)^\mathsf{T}$. It sends $\mathbf{r}_i \mapsto -\tfrac{1}{2}\mathbf{r}_i$ for each $i$, and its Choi spectrum $\{(5 \pm \sqrt{19})/16,\ (3 \pm \sqrt{3})/16\}$ is strictly positive, so it is a channel reaching $\boldsymbol{\sigma}^{(-)}$ at $c = 1/2 > c^*$. It is not unital, since $\mathbf{b} \ne \mathbf{0}$, and its range does not commute, since $A$ is invertible. By Theorem~\ref{thm:qubit_cpo_classification}, it is therefore not a CPO. What fails beyond $c^*$ is the conjunction of complete positivity with resource non-generation, and it is that conjunction the orientation detects.
\end{remark}

\begin{definition}[Bargmann order of a task]
\label{def:orders}
Let the data available about a pair of $n$-multistates be the Bargmann invariants of $\boldsymbol{\rho}$ and of $\boldsymbol{\sigma}$ up to some degree, each taken within its own multistate, with no cross invariants mixing $\rho_i$ and $\sigma_j$. Then $k_{\mathrm{detect}}(n)$ is the least degree at which those data decide whether an $n$-multistate is set-coherent, and $k_{\mathrm{convert}}(n)$ the least degree at which they decide whether $\boldsymbol{\rho} \succeq_{\SC} \boldsymbol{\sigma}$, in both cases for every instance of that size.
\end{definition}

%
%

\begin{remark}
\label{rem:scoping}
The exclusion of cross invariants in Definition~\ref{def:orders} is not cosmetic. If the joint invariants of the concatenated $2n$-tuple are admitted as well, the degree-two overlaps assemble the cross-Gram entries $2\Delta_2(\rho_i, \sigma_j) - 1$. Choose any spanning triple of source indices $J$ and write $R_J, S_J \in \mathbb{R}^{3\times3}$ for the corresponding Bloch blocks. Then
\begin{equation}
\label{eq:joint_order_two}
\det(R_J^T S_J) = \det(T)\,\det(R_J^T R_J),
\end{equation}
so the orientation is already fixed at degree two and the separation collapses. Convertibility is invariant under independent unitary conjugation
of the source and target. If $\Phi$ realizes $\boldsymbol{\rho}\succeq_{\SC}\boldsymbol{\sigma}$, then $\operatorname{Ad}_V\circ\Phi\circ\operatorname{Ad}_{U^\dagger}$
realizes $U\boldsymbol{\rho}U^\dagger \succeq_{\SC} V\boldsymbol{\sigma}V^\dagger$. This channel is again a CPO, so convertibility is constant on the orbits of the independent $\mathrm{U}(d)\times\mathrm{U}(d)$ action.
\end{remark}

Second- and third-order Bargmann invariants not only suffice for criterion~\eqref{eq:qubit_master} but also determine a qubit multistate up to a common unitary.

\begin{theorem}[Order 3 determines a qubit multistate]
\label{thm:global_order}
Let $\boldsymbol{\rho}, \boldsymbol{\rho}' \in \mathcal{D}(\mathbb{C}^2)^n$ satisfy $\Delta_2(\rho_i, \rho_j) = \Delta_2(\rho'_i, \rho'_j)$ and $\Delta_3(\rho_i, \rho_j, \rho_k) = \Delta_3(\rho'_i, \rho'_j, \rho'_k)$ for all indices. Then there is a single unitary $U$ with $\rho'_i = U \rho_i U^\dagger$ for every $i$.
\end{theorem}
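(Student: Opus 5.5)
The plan is to pass to Bloch vectors and reduce the statement to a fact about configurations of vectors in $\mathbb{R}^3$ up to rotation. The order-two data fix all inner products. From $\Delta_2(\rho_i,\rho_j)=\tfrac12(1+\mathbf r_i\cdot\mathbf r_j)$, including the diagonal case $i=j$ (purities), the Gram matrices coincide: $R^{\mathsf T}R=R'^{\mathsf T}R'$, where $R=[\mathbf r_1,\ldots,\mathbf r_n]\in\mathbb R^{3\times n}$. A standard linear-algebra fact then gives an orthogonal $O\in\mathrm O(3)$ with $\mathbf r'_i=O\mathbf r_i$ for all $i$. To see this, define $O$ on $\operatorname{span}\{\mathbf r_i\}$ by $\sum a_i\mathbf r_i\mapsto\sum a_i\mathbf r'_i$. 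This map is well defined and isometric because $\|\sum a_i\mathbf r_i\|^2=a^{\mathsf T}Ga$ depends only on $G$. We then extend it isometrically to the orthogonal complements, which have equal dimension since $\operatorname{rank}R=\operatorname{rank}G=\operatorname{rank}R'$.

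The key step is to upgrade $O$ to $\mathrm{SO}(3)$, since rotations correspond exactly to unitary conjugations, as already used in the signed SVD discussion. We split by the rank $r$ of $G$.

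\emph{Case $r\le 2$.} All Bloch vectors lie in a plane $P$, and the extension of $O$ on $P^\perp$ has a free sign. Composing with the reflection through $O(P)$ fixes every $\mathbf r'_i$ and flips $\det O$. So we may always choose $O\in\mathrm{SO}(3)$.

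\emph{Case $r=3$.} Some triple $\mathbf r_i,\mathbf r_j,\mathbf r_k$ is linearly independent, so $\det[\mathbf r_i,\mathbf r_j,\mathbf r_k]\ne0$. By the Pauli identity in the proof of Theorem~\ref{thm:gram_svd},
\[
\operatorname{Im}\Delta_3(\rho_i,\rho_j,\rho_k)=\tfrac14\det[\mathbf r_i,\mathbf r_j,\mathbf r_k],
\]
and similarly for the primed triple. Since $\det[\mathbf r'_i,\mathbf r'_j,\mathbf r'_k]=\det O\cdot\det[\mathbf r_i,\mathbf r_j,\mathbf r_k]$, equality of the third-order invariants forces $\det O=+1$.

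Finally, we take $U\in\mathrm{SU}(2)$ whose adjoint action on the Pauli vector is $O$ (surjectivity of $\mathrm{SU}(2)\to\mathrm{SO}(3)$). Then $U\rho_iU^\dagger=\tfrac12(\mathbb I+(O\mathbf r_i)\cdot\boldsymbol\sigma)=\rho'_i$.

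I expect the main obstacle to be bookkeeping rather than depth. One point is making sure the diagonal entries $\Delta_2(\rho_i,\rho_i)$ are included in the hypothesis "for all indices," so that the norms are fixed. The other is handling the rank-deficient case correctly, where the third-order data carry no information and the orientation freedom must come instead from the reflection. In fact only the real parts of the $\Delta_3$'s are redundant, since they are determined by the $\Delta_2$'s; the proof uses only their imaginary parts and only on one spanning triple.
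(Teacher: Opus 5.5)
Your proposal is correct and follows essentially the same route as the paper: degree-two equality gives $O\in\mathrm{O}(3)$ with $R'=OR$; in the rank $\leqslant 2$ case a reflection fixing the configuration lets you flip $\det O$; in the rank-$3$ case $\operatorname{Im}\Delta_3=\tfrac14\det[\mathbf r_i,\mathbf r_j,\mathbf r_k]$ forces $\det O=+1$; and surjectivity of $\mathrm{SU}(2)\to\mathrm{SO}(3)$ supplies $U$. Two differences are cosmetic. You compose on the left with the reflection through the target plane $O(P)$, whereas the paper composes on the right with $Q=\mathbb I_3-2\mathbf n\mathbf n^{\mathsf T}$ fixing $\operatorname{range}(R)$. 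You also spell out the construction of $O$ from Gram equality and the need for the diagonal purities, both of which the paper leaves implicit.
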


\begin{proof}
Collect the Bloch vectors into $R = [\mathbf{r}_1, \ldots, \mathbf{r}_n]$ and $R' = [\mathbf{r}'_1, \ldots, \mathbf{r}'_n]$ in $\mathbb{R}^{3 \times n}$. Degree-two equality and $\mathbf{r}_i \cdot \mathbf{r}_j = 2\Delta_2(\rho_i,\rho_j) - 1$ give $R^T R = R'^T R'$, so $R' = O R$ for some $O \in \mathrm{O}(3)$.

If $\rank (R) \leqslant 2$, pick a unit vector $\mathbf{n} \bot \range(R)$ and set $Q = \mathbb{I}_3 - 2\,\mathbf{n}\mathbf{n}^\mathsf{T}$. Then $QR = R$ and $\det Q = -1$, so replacing $O$ by $OQ$ leaves $R' = OR$ intact and reverses the sign of the determinant; $O$ may therefore be taken in $\mathrm{SO}(3)$.

If $\rank R = 3$, some triple of columns is linearly independent, say $(i,j,k)$. By the identity in the proof of Theorem~\ref{thm:gram_svd}, $\operatorname{Im}\Delta_3(\rho_i,\rho_j,\rho_k) = \tfrac{1}{4}\det[\mathbf{r}_i, \mathbf{r}_j, \mathbf{r}_k] \ne 0$, and the same holds for $\boldsymbol{\rho}'$ with the determinant multiplied by $\det O$. Equality of the degree-three invariants forces $\det (O) = +1$.

In both cases $O \in \mathrm{SO}(3)$, which is the Bloch action of conjugation by some $U \in \mathrm{SU}(2)$.
\end{proof}

Bargmann invariants play the same role in deciding local unitary equivalence of multipartite states~\cite{Zhang2025}. Theorem~\ref{thm:global_order} is the multistate counterpart, with the orientation supplied by a single cubic phase.

That degree three closes the invariant theory of $2 \times 2$ matrix tuples is classical: Sibirskii showed that the simultaneous-conjugation invariants of such tuples are generated in degree at most three~\cite{Sibirskii1968}. What Theorem~\ref{thm:global_order} adds is the identification of the missing cubic datum as chirality, in a form that reads directly off the Bloch geometry, and it is that datum which the next corollary shows to be operationally necessary.

Together with the independent-unitary invariance of convertibility noted in Remark~\ref{rem:scoping}, Theorem~\ref{thm:global_order} allows us to determine the Bargmann orders for qubit multistates of arbitrary size and source rank.

\begin{corollary}[Global order separation]
\label{cor:global_order}
For qubit multistates, $k_{\mathrm{detect}}(2)
=k_{\mathrm{convert}}(2)=2$, whereas for every $n\geqslant3$,
\[
k_{\mathrm{detect}}(n)=2<3=k_{\mathrm{convert}}(n).
\]
\end{corollary}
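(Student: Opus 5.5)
The corollary has four parts: $k_{\mathrm{detect}}(n)=2$ for all $n\geqslant2$, $k_{\mathrm{convert}}(2)=2$, and for $n\geqslant3$ both $k_{\mathrm{convert}}(n)\leqslant3$ and $k_{\mathrm{convert}}(n)>2$. I will prove each as a pair consisting of a sufficiency statement and a necessity statement.

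\emph{Detection.} By the Bloch-space discussion, set coherence is equivalent to non-collinearity of $\{\mathbf r_i\}$. This holds iff $\rank(G_{\boldsymbol\rho})\geqslant2$, and $G_{\boldsymbol\rho}$ is built from degree-two invariants, so $k_{\mathrm{detect}}\leqslant2$. For the reverse bound, degree one carries no information, since $\Tr\rho_i=1$ always. A commuting and a non-commuting pair are therefore indistinguishable at degree one, which gives $k_{\mathrm{detect}}(n)=2$ for every $n\geqslant2$.

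\emph{Upper bound $k_{\mathrm{convert}}\leqslant3$.} Suppose two instance pairs $(\boldsymbol\rho,\boldsymbol\sigma)$ and $(\boldsymbol\rho',\boldsymbol\sigma')$ agree in all invariants of degree $\leqslant3$, taken within each multistate. Theorem~\ref{thm:global_order} then gives unitaries $U,V$ with $\boldsymbol\rho'=U\boldsymbol\rho U^\dagger$ and $\boldsymbol\sigma'=V\boldsymbol\sigma V^\dagger$. By Remark~\ref{rem:scoping}, convertibility is constant on $\mathrm U(2)\times\mathrm U(2)$ orbits, so both pairs have the same convertibility status. Hence degree-three data decide $\succeq_{\SC}$ for every $n$.

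\emph{Case $n=2$.} Here I show degree two already suffices. Two Bloch vectors span a space of dimension at most two. As in the rank $\leqslant2$ case of the proof of Theorem~\ref{thm:global_order}, a reflection fixing that span lets $O\in\mathrm O(3)$ be replaced by an element of $\mathrm{SO}(3)$. Consequently, equality of Gram matrices alone forces common-unitary equivalence. The same orbit argument then shows degree-two data decide conversion. Degree one is insufficient: the source pair $\boldsymbol\rho=(\rho,\rho)$ with $\rho$ pure has the same degree-one data as the source pair $\boldsymbol\rho'$ given by two orthogonal pure states. Take the common target $\boldsymbol\sigma$ to be a non-commuting pair of pure states. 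Then $\boldsymbol\rho$ cannot reach $\boldsymbol\sigma$, since any channel maps equal inputs to equal outputs. On the other hand, $\boldsymbol\rho'$ reaches $\boldsymbol\sigma$ via a measure-and-prepare channel, and this channel is a CPO because its commuting inputs map to $\sigma_1$ and $\sigma_2$ convexly. This last point needs checking: if the target has distinct non-commuting entries, then the measure-and-prepare channel is not a CPO. So the safer degree-one counterexample uses a free target. Let $\boldsymbol\sigma'$ be the pair of orthogonal pure states $(|0\rangle\langle0|,|1\rangle\langle1|)$, which is free. Then $\boldsymbol\rho'\succeq_{\SC}\boldsymbol\sigma'$ by Proposition~\ref{prop:free_target}, while $\boldsymbol\rho\not\succeq_{\SC}\boldsymbol\sigma'$. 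Both sources have identical degree-one data, so $k_{\mathrm{convert}}(2)=2$.

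\emph{Lower bound $k_{\mathrm{convert}}(n)\geqslant3$ for $n\geqslant3$.} This is the main step. I use the mirror counterexample of Sec.~\ref{sec:mirror} with some $c\in(1/3,1]$, padding it to size $n$. The source is $\mathbf r_1,\mathbf r_2,\mathbf r_3$ followed by $\mathbf r_4=\cdots=\mathbf r_n=\mathbf r_1$, and the targets are $\pm c\,\mathbf r_i$. The two targets have equal Gram matrices $c^2G_{\boldsymbol\rho}$, so they share all degree-$\leqslant2$ data. Convertibility, however, differs. The source Bloch vectors span $\mathbb R^3$, so any realizing unital map must satisfy $T=\pm c\,\mathbb I_3$. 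The padded entries impose no further constraint, because $T\mathbf r_1$ is already determined. By Theorem~\ref{thm:qubit_cpo_classification} the realizing map must be unital, since the targets are resourceful. Theorem~\ref{thm:qubit_exact}, or equivalently the Choi eigenvalue computation, then gives convertibility for $\eta=+1$ and non-convertibility for $\eta=-1$. Hence degree two does not decide conversion, and $k_{\mathrm{convert}}(n)=3$. The delicate points are justifying uniqueness of $T$ for $n>3$ and checking that Definition~\ref{def:orders} compares instances with identical source data, both of which are handled above.
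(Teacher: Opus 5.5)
Your proof is correct and follows the paper's argument. It uses orbit invariance (Remark~\ref{rem:scoping}) together with Theorem~\ref{thm:global_order} for $k_{\mathrm{convert}}\leqslant3$, the rank-$\leqslant2$ reflection for $n=2$, a degree-one counterexample, and the padded mirror family for $n\geqslant3$. The differences from the paper are cosmetic: you pad with repeated copies of $\mathbf r_1$ (and $\pm c\,\mathbf r_1$) where the paper appends $\mathbb I/2$, and at degree one you fix a free target and vary the source rather than fixing the free source $(|0\rangle\langle0|,|1\rangle\langle1|)$ and varying the target. You were right to discard your first attempt, since no CPO can send a commuting source pair to a non-commuting target at all.
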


\begin{proof}
Convertibility is constant on the orbits of the independent unitary action of Remark~\ref{rem:scoping}, so it is a function of those orbits alone. By Theorem~\ref{thm:global_order} the invariants of degree at most three separate the orbits, hence fix convertibility for every qubit multistate, spanning or not; thus $k_{\mathrm{convert}}(n) \leqslant3$.

Order one never suffices. Take the free source $\boldsymbol{\rho} = (|0\rangle\langle 0|, |1\rangle\langle 1|)$ and compare the target $\boldsymbol{\sigma}^{(A)} = \boldsymbol{\rho}$, reached by the identity channel, with $\boldsymbol{\sigma}^{(B)} = (|0\rangle\langle 0|, |{+}\rangle\langle{+}|)$, which is resourceful and so out of reach of any CPO acting on a free source. Every state has $\Delta_1 = \Tr\rho_i = 1$, so degree-one data cannot separate the two instances; padding both sides with copies of $\frac{\mathbb{I}}2$ carries the example to any size. Hence $k_{\mathrm{convert}}(n) \geqslant 2$ for every $n \geqslant 2$.

For $n = 2$ the third-order data are already carried by the second. Any triple drawn from two Bloch vectors is degenerate, so $\operatorname{Im}\Delta_3 = 0$ throughout, while $\operatorname{Re}\Delta_3$ is a function of the pairwise dot products alone. Theorem~\ref{thm:global_order} therefore separates the orbits at degree two, so $k_{\mathrm{convert}}(2) \leqslant 2$, and with the lower bound $k_{\mathrm{convert}}(2) = 2$.

For $n \geqslant 3$ degree two does not suffice. The mirror family of Sec.~\ref{sec:mirror} settles $n = 3$. For $n > 3$, append $n - 3$ copies of $\frac{\mathbb{I}}2$ to the source and to both targets. A unital map fixes $\frac{\mathbb{I}}2$, so $\boldsymbol{\sigma}^{(+)}$ is still reached by $T_+ = c\mathbb{I}_3$ while $\boldsymbol{\sigma}^{(-)}$ still forces $T_- = -c\mathbb{I}_3$, which is not completely positive for $c > \frac13$. The added degree-two invariants are $\Delta_2(\sigma_i, \frac{\mathbb{I}}2) = \frac12$ on both sides, so the padded targets remain indistinguishable at order two. Therefore, for every \(n \geqslant 3\), there are still counterexamples where the data is insufficient for the second order. Based on the above lower bound, we obtain \(k_{\mathrm{convert}}(n) = 3\).

The invariant $\Delta_2$ determines $G_{\boldsymbol{\rho}}$, and $\boldsymbol{\rho}$ is set-coherent exactly when $\rank G_{\boldsymbol{\rho}}\geqslant2$, while $\Delta_1(\rho_i) = \Tr(\rho_i) = 1$ carries no information. Thus detection is settled at degree two for every $n \geqslant 2$.
\end{proof}
This corollary partially addresses the open question in \cite{ZhangL2026} concerning the operational meaning of \(z\in\mathcal{B}_{n+1}\setminus\mathcal{B}_n\), by showing that third-order Bargmann invariants are necessary for deciding qubit multistate convertibility, whereas second-order invariants suffice for resource detection.

\section{Higher-Dimensional Rigidity: The Isotropic Collapse}
\label{sec:high_d}
For qubits, every completely positive diagonal unital channel $\Phi_{\operatorname{diag}(\boldsymbol{\lambda})}$ can be written as $\Phi_{\operatorname{diag}(\boldsymbol{\lambda})}(\rho)
=\sum_{\mu=0}^{3}p_\mu\sigma_\mu\rho\sigma_\mu$,
where $\sigma_i$ are Pauli matrices,
and the Choi eigenvalues $p_\mu$ in
Eq.~\eqref{eq:choi_spectrum} are nonnegative and sum to one. These probability mixtures fill the Pauli tetrahedron, see Fig.~\ref{fig:dimension_anomaly} (a). Its center $\boldsymbol{\lambda}=(0,0,0)$ gives
$p_\mu=1/4$ and the completely depolarizing channel. Besides the isotropic depolarizing axis
$\lambda_1=\lambda_2=\lambda_3=q$, the Pauli tetrahedron contains anisotropic channels with unequal singular values
$s_1,s_2,s_3$.

In stark contrast, for $d\geqslant3$, CPOs with non-commutative range exhibit extreme geometric rigidity. Up to unitary conjugation, with or without transposition, they scale all centered operators by a single factor $q$, as stated in the following theorem.

\begin{figure*}[t]
\includegraphics[width=\textwidth]{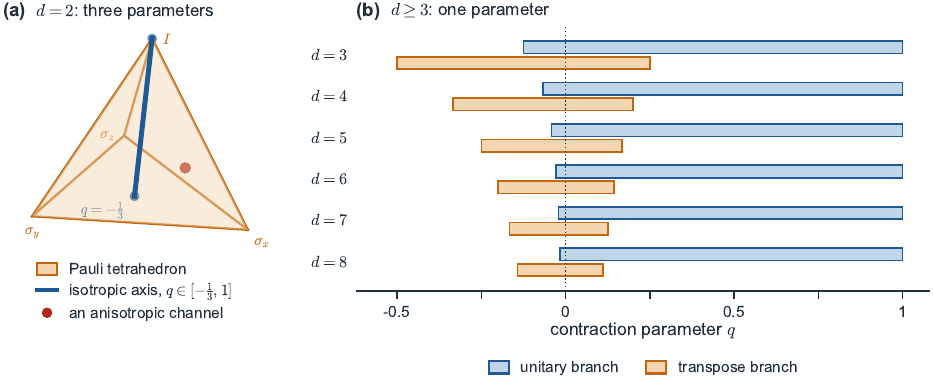}
\caption{The dimensional anomaly. (a)~For $d=2$, a unital CPO is any point of the Pauli tetrahedron, so the Bloch ball may be deformed along three independent semi-axes. The isotropic axis $\lambda_1=\lambda_2=\lambda_3=q$ is one line inside that body, $q\in[-1/3,1]$. The red dot marks a generic anisotropic channel. (b)~For $d\geqslant3$, CPOs with non-commutative range have a single scaling parameter $q$. The plot shows the admissible intervals $[-1/(d^2-1),1]$ for the unitary branch and $[-1/(d-1),1/(d+1)]$ for the transpose branch. The latter has width $2d/(d^2-1)$ and shrinks toward $q=0$ as $d$ grows. At $q=0$, the channel is completely depolarizing and has commutative range.}
\label{fig:dimension_anomaly}
\end{figure*}

\begin{theorem}[High-dimensional CPO rigidity~\cite{Hu2012,Guo2013}]
\label{thm:rigidity}
Let $\mathcal{H} \cong \mathbb{C}^d$ with $d \geqslant 3$. Suppose $\Phi \in \CPO_d$ is a non-trivial channel whose image is not commutative. Then $\Phi$ must be strictly isotropic, belonging to one of two branches:
\begin{enumerate}
    \item \textbf{Unitary-depolarizing (U-branch):}
    \begin{equation}
    \label{eq:ubranch_form}
    \Phi(X) = q\, U X U^\dagger + (1 - q)\Tr(X)\frac{\mathbb{I}}{d},
    \end{equation}
    with unitary $U \in \mathrm{U}(d)$ and contraction parameter:
    \begin{equation*}
    -\frac{1}{d^2 - 1} \leqslant q \leqslant 1, \quad q \neq 0.
    \end{equation*}
    \item \textbf{Transpose-depolarizing (T-branch):}
    \begin{equation}
    \label{eq:tbranch_form}
    \Phi(X) = q\, U X^\mathsf{T} U^\dagger + (1 - q)\Tr(X)\frac{\mathbb{I}}{d},
    \end{equation}
    with unitary $U \in \mathrm{U}(d)$ and contraction parameter:
    \begin{equation*}
    -\frac{1}{d - 1} \leqslant q \leqslant \frac{1}{d + 1}, \quad q \neq 0.
    \end{equation*}
\end{enumerate}
\end{theorem}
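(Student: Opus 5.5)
The plan has two stages. First, use the structure theorem for commutativity-preserving linear maps on $M_d$, $d\geqslant3$, from~\cite{Hu2012,Guo2013} to obtain the algebraic form. Second, impose complete positivity, branch by branch, to pin down the interval for $q$. The structure theorem says the following. A unital-free linear map $\Phi$ that preserves commutativity of Hermitian matrices either has commutative range, or satisfies
\[
\Phi(X)=q\,UXU^\dagger+f(X)\mathbb I \quad\text{or}\quad \Phi(X)=q\,UX^{\mathsf T}U^\dagger+f(X)\mathbb I,
\]
with $q\neq0$ real and $f$ a linear functional. Since the image is assumed noncommutative, we are in one of the two non-degenerate branches. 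One point needs care here. The hypothesis in $\CPO_d$ only concerns commuting pairs of density operators, so we first extend it by linearity. Every commuting Hermitian pair is a pair of real combinations of commuting projections, and $\Phi$ is trace-preserving and linear, so the hypothesis propagates to all commuting Hermitian pairs. That is exactly the hypothesis needed to invoke the cited classification, and the dimension restriction $d\geqslant3$ enters precisely there.

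Next we pin down $f$ using trace preservation and positivity. Taking traces gives $\Tr X=q\Tr X+d\,f(X)$. Hence $f(X)=\frac{1-q}{d}\Tr X$, which yields exactly the forms~\eqref{eq:ubranch_form} and~\eqref{eq:tbranch_form}. Conjugating by $U^\dagger$ does not affect complete positivity, so we may set $U=\mathbb I$.

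Complete positivity is then read off the Choi matrix $J=\sum_{ij}|i\rangle\langle j|\otimes\Phi(|i\rangle\langle j|)$.

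For the U-branch, $J=q\,d\,|\Omega\rangle\langle\Omega|+\frac{1-q}{d}\mathbb I_{d^2}$, where $|\Omega\rangle$ is the normalized maximally entangled vector. Its eigenvalues are $q d+\frac{1-q}{d}$ and $\frac{1-q}{d}$. These are nonnegative exactly when $q\leqslant1$ and $q(d^2-1)\geqslant-1$, which gives the interval $-\frac{1}{d^2-1}\leqslant q\leqslant1$.

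For the T-branch, the Choi matrix of the transpose is the swap operator $F$, so $J=qF+\frac{1-q}{d}\mathbb I$. The eigenvalues of $F$ are $\pm1$, on the symmetric and antisymmetric subspaces; the antisymmetric subspace is nonempty since $d\geqslant2$. Positivity therefore requires $\frac{1-q}{d}\pm q\geqslant0$. The plus sign gives $q\geqslant-\frac1{d-1}$, and the minus sign gives $q\leqslant\frac1{d+1}$.

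Finally, $q\neq0$ holds because $q=0$ gives the completely depolarizing channel, whose range is commutative.

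The main obstacle is the first step: justifying that the cited structure theorem applies, with the correct hypotheses and the real-scalar normalization. I would handle it by checking three things. The theorem in~\cite{Hu2012,Guo2013} is stated for linear maps on Hermitian matrices preserving commutativity in both dimensions $\geqslant3$. Our hypothesis on states extends to all Hermitian matrices by the spectral decomposition argument above. Hermiticity preservation forces $q$ real and $f$ Hermitian-valued. I would also note that for $d=2$ the conclusion genuinely fails, as the anisotropic unital channels of Theorem~\ref{thm:qubit_cpo_classification} show; this explains why the rigidity must come from the cited result rather than from a short self-contained argument.
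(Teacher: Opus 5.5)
Your proposal is correct and takes essentially the same route as the paper. Both take the two-branch form from~\cite{Hu2012,Guo2013} as given and read the admissible interval for $q$ off the Choi spectrum in each branch; your unnormalized Choi eigenvalues are exactly $d$ times the paper's normalized ones. The extra steps you supply (extending commutativity preservation from density operators to all Hermitian matrices, fixing $f(X)=\frac{1-q}{d}\Tr X$ by trace preservation, and excluding $q=0$ because it yields commutative range) are details the paper leaves implicit.
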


For $d\geqslant3$, Theorem~\ref{thm:rigidity} shows that CPO
conversion to a resourceful target admits no independent directional
scaling (Fig.~\ref{fig:dimension_anomaly}). Let
$X_i=\rho_i-\mathbb{I}/d$ and $Y_i=\sigma_i-\mathbb{I}/d$.
Such a conversion must satisfy
\begin{equation*}
    Y_i=q\,UX_iU^\dagger
    \quad\text{or}\quad
    Y_i=q\,UX_i^{\mathsf T}U^\dagger.
\end{equation*}
The detailed proof is given in Appendix~\ref{app:rigidity}.

To express this restriction in terms of invariants, define the
centered Bargmann invariants by
\[
\widetilde{\Delta}_k(\boldsymbol{\rho};i_1,\ldots,i_k)
=\Tr(X_{i_1}\cdots X_{i_k}).
\]

\begin{theorem}[Finite centered-Bargmann invariant criterion]
\label{thm:finite_invariants}
For $d \geqslant 3$, exact conversion $\boldsymbol{\rho} \succeq_{\SC} \boldsymbol{\sigma}$ between resourceful multistates holds if and only if one of the two branches of Theorem~\ref{thm:rigidity} is realized, that is, if and only if there exists $q \ne 0$ such that, for every $k \leqslant d^2$ and every index tuple, either
\begin{align}
\text{\emph{(U)}} \quad & \widetilde{\Delta}_k(\boldsymbol{\sigma}) = q^k\, \widetilde{\Delta}_k(\boldsymbol{\rho}),
&& -\tfrac{1}{d^2-1} \leqslant q \leqslant 1, \label{eq:ubranch_inv} \\
\text{\emph{(T)}} \quad & \widetilde{\Delta}_k(\boldsymbol{\sigma}) = q^k\, \overline{\widetilde{\Delta}_k(\boldsymbol{\rho})},
&& -\tfrac{1}{d-1} \leqslant q \leqslant \tfrac{1}{d+1}. \label{eq:tbranch_inv}
\end{align}
The degree bound $d^2$ is the Razmyslov--Procesi bound for the trace algebra of $d \times d$ matrices under simultaneous conjugation~\cite{Razmyslov1974,Procesi1976}.
\end{theorem}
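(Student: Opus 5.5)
The proof has two halves. The first reduces conversion to the two isotropic branches. The second turns branch realizability into equalities of invariants using the trace-algebra theory.

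\emph{Reduction to the branches.} Since $\boldsymbol{\sigma}$ is resourceful, any realizing CPO has non-commutative range. By Theorem~\ref{thm:rigidity} it is then of the form \eqref{eq:ubranch_form} or \eqref{eq:tbranch_form}, with $q\neq0$ in the stated interval. Subtracting $\mathbb I/d$ and using $\Tr\rho_i=1$, conversion becomes $Y_i=qUX_iU^\dagger$ for all $i$, or $Y_i=qUX_i^{\mathsf T}U^\dagger$ for all $i$. Conversely, any $q$ in the admissible interval together with a unitary $U$ satisfying one of these relations defines, by Theorem~\ref{thm:rigidity} (more precisely, by the complete-positivity ranges quoted there), a valid CPO sending $\rho_i\mapsto\sigma_i$. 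So convertibility is equivalent to the existence of $q$ and $U$ with $(Y_i)_i$ simultaneously unitarily conjugate either to $(qX_i)_i$ or to $(qX_i^{\mathsf T})_i$.

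\emph{Necessity of the invariant equalities.} Assume the U-branch holds. Then $\Tr(Y_{i_1}\cdots Y_{i_k})=q^k\Tr(X_{i_1}\cdots X_{i_k})$ by cyclicity, which is \eqref{eq:ubranch_inv}. In the T-branch, $X_i$ is Hermitian, so $X_i^{\mathsf T}=\overline{X_i}$. Hence
\[
\Tr(Y_{i_1}\cdots Y_{i_k})=q^k\Tr(\overline{X_{i_1}}\cdots\overline{X_{i_k}})=q^k\,\overline{\widetilde\Delta_k(\boldsymbol\rho;i_1,\ldots,i_k)},
\]
which is \eqref{eq:tbranch_inv}. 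This direction needs no degree bound.

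\emph{Sufficiency.} This is the main obstacle. Suppose \eqref{eq:ubranch_inv} holds for all $k\le d^2$. Set $A_i=qX_i$ and $B_i=Y_i$. These are Hermitian tuples whose trace words of length $\le d^2$ coincide. By the Razmyslov--Procesi theorem, trace words of degree $\le d^2$ generate the ring of $\mathrm{GL}_d$-invariants of $d\times d$ matrix tuples. Hence all trace words agree, and so the two tuples have the same closed orbit under simultaneous $\mathrm{GL}_d$-conjugation. (This is the point where the Razmyslov--Procesi degree bound enters; the sharper Kuzmin-type bounds would also suffice.)

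I then pass to unitary conjugation. Hermitian tuples are closed under $*$, so they generate $*$-algebras, which are semisimple. Their orbits are therefore closed, giving $B_i=gA_ig^{-1}$ for some $g$. A polar-decomposition argument makes $g$ unitary. Concretely, taking adjoints gives $B_i=g^{-*}A_ig^{*}$, so $g^*g$ commutes with every $A_i$, and therefore so does $|g|=(g^*g)^{1/2}$. Writing $g=V|g|$ gives $B_i=VA_iV^\dagger$. (Equivalently, one may cite Specht's theorem: equal traces of all words in $X_i,X_i^\dagger$ imply unitary similarity.)

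The T-branch is handled identically with $A_i=q\overline{X_i}$, whose trace words are the complex conjugates of those of $qX_i$. Once the conjugating unitary is found, the interval on $q$ restores complete positivity, and the CPO is written down explicitly, which completes the equivalence.
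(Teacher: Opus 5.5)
Your proposal is correct and follows essentially the same route as the paper. You reduce to the two rigid branches via Theorem~\ref{thm:rigidity}, obtain necessity from cyclicity (with $X_i^{\mathsf T}=\overline{X_i}$ playing the role of the paper's word-reversal identity), and obtain sufficiency from Razmyslov--Procesi generation, closedness of Hermitian orbits, and the polar-decomposition argument that makes the conjugator unitary; one minor caution is that your parenthetical about ``sharper Kuzmin-type bounds'' is not established in general (Kuzmin's $d(d+1)/2$ is a lower bound and only conjecturally the true nilpotency index), but nothing in your argument depends on it.
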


\begin{proof}
Write $Y_i = \sigma_i - \frac{\mathbb{I}}d$. By Theorem~\ref{thm:rigidity} a resourceful conversion is realized by $Y_i = q\,U X_i U^\dagger$ with $-\tfrac{1}{d^2-1} \leqslant q \leqslant 1$, or by $Y_i = q\,U X_i^\mathsf{T} U^\dagger$ with $-\tfrac{1}{d-1} \leqslant q \leqslant \tfrac{1}{d+1}$, in both cases with $q \neq 0$ and $U$ unitary.

Necessity. On the unitary branch the conjugations cancel cyclically,
\begin{equation*}
\Tr[Y_{i_1} \cdots Y_{i_k}] = q^k \Tr[X_{i_1} \cdots X_{i_k}],
\end{equation*}
which is~\eqref{eq:ubranch_inv}. On the transpose branch the same cancellation leaves a transposed product, and since $\Tr[A^\mathsf{T}] = \Tr[A]$ and the $X_i$ are Hermitian,
\begin{align*}
\Tr[X_{i_1}^\mathsf{T} \cdots X_{i_k}^\mathsf{T}]
  &= \Tr[(X_{i_k} \cdots X_{i_1})^\mathsf{T}] \notag \\
  &= \Tr[X_{i_k} \cdots X_{i_1}] \notag \\
  &= \overline{\Tr[X_{i_1} \cdots X_{i_k}]},
\end{align*}
which is~\eqref{eq:tbranch_inv}.

Sufficiency. Suppose~\eqref{eq:ubranch_inv} holds for every word
of degree at most $d^2$. By the Razmyslov--Procesi generation
bound, the Hermitian tuples $(Y_i)$ and $(qX_i)$ then have the
same trace invariants of every degree~\cite{Razmyslov1974,Procesi1976}.

Polynomial invariants distinguish distinct closed orbits under simultaneous $\mathrm{GL}(d,\mathbb C)$ conjugation. Since $(Y_i)$ and $(qX_i)$ are Hermitian tuples, their orbits are closed. Their equal invariants therefore place them in the same orbit, so there exists a single $S\in\mathrm{GL}(d,\mathbb C)$ such that $Y_i=S(qX_i)S^{-1}$ for every $i$.

The conjugating element may then be taken unitary. Both tuples are Hermitian, so taking adjoints in $Y_i = S(q X_i) S^{-1}$ gives $(S^\dagger S) X_i = X_i (S^\dagger S)$ for every $i$. Write the polar decomposition $S = UP$ with $U$ unitary and $P = (S^\dagger S)^{1/2}$. Since $P$ is a function of $S^\dagger S$, it also commutes with every $X_i$. Therefore,
\begin{equation*}
Y_i = U P (q X_i) P^{-1} U^\dagger = U (q X_i) U^\dagger .
\end{equation*}
Since $q$ lies in the stated admissible interval, this map is a CPO and realizes the required conversion. The transpose branch is identical after replacing $(X_i)$ by $(X_i^\mathsf{T})$, which is again a Hermitian tuple with conjugated invariants.
\end{proof}

The complex conjugation in the transpose branch is the higher-dimensional counterpart of the qubit mirror obstruction. Because the $X_i$ are Hermitian, $\Tr[X_{i_1}^\mathsf{T} \cdots X_{i_k}^\mathsf{T}] = \Tr[X_{i_k} \cdots X_{i_1}] = \overline{\widetilde{\Delta}_k(\boldsymbol{\rho})}$: reversing the index order conjugates the invariant. The two branches therefore coincide for $k \leqslant 2$, and $k = 3$ is the earliest order at which they can differ, namely when some cubic invariant has non-zero imaginary part: orientation is invisible to pairwise data in every dimension, not only for qubits.

\section{Arbitrary Multistate Sizes and Degenerate Geometries}
\label{sec:arbitrary_n}

Corollary~\ref{cor:global_order} settles abstractly that no invariant above degree three is ever needed for qubit conversion. We now turn that statement into an explicit criterion for multistates of arbitrary size.

For qubit $n$-multistates, let
$G_{\boldsymbol{\rho}},G_{\boldsymbol{\sigma}}$
denote their full $n\times n$ Bloch Gram matrices.
Set $J=\{1,2,3\}$, write
$R_J=[\mathbf r_1,\mathbf r_2,\mathbf r_3]$ and
$S_J=[\mathbf s_1,\mathbf s_2,\mathbf s_3]$,
and use $G[I,K]$ for a submatrix.
When $G_{\boldsymbol{\rho}}[J,J]$ is invertible, define
\[
C_{\boldsymbol{\rho}}[:]
=G_{\boldsymbol{\rho}}[:,J]\,
G_{\boldsymbol{\rho}}[J,J]^{-1}
\in\mathbb R^{n\times3}.
\]

\begin{theorem}[Explicit consistency criterion for spanning multistates]
    Let $\boldsymbol{\rho}$ and $\boldsymbol{\sigma}$ be qubit
    $n$-multistates, where $n\geqslant3$ and
    $\boldsymbol{\sigma}$ is resourceful.
    Suppose that the Bloch vectors
    $\mathbf r_1,\mathbf r_2,\mathbf r_3$ span $\mathbb{R}^3$. Then $\boldsymbol{\rho}\succeq_{\SC}\boldsymbol{\sigma}$ if and only if the candidate map with Bloch matrix $T=S_JR_J^{-1}$ is completely positive and
    for every $k>3$,
    \begin{align}
        G_{\boldsymbol{\sigma}}[J,k]
        &=G_{\boldsymbol{\sigma}}[J,J]\,
        C_{\boldsymbol{\rho}}[k]^{\mathsf T},
        \label{eq:extra_state_dir}\\
        G_{\boldsymbol{\sigma}}[k,k]
        &=C_{\boldsymbol{\rho}}[k]\,
        G_{\boldsymbol{\sigma}}[J,J]\,
        C_{\boldsymbol{\rho}}[k]^{\mathsf T}.
        \label{eq:extra_state_norm}
    \end{align}
    Here $G[I,K]$ denotes the submatrix with rows in $I$
    and columns in $K$.
\end{theorem}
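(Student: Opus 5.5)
Since $\boldsymbol{\sigma}$ is resourceful, the discussion after Theorem~\ref{thm:qubit_cpo_classification} shows that any CPO realizing $\boldsymbol{\rho}\succeq_{\SC}\boldsymbol{\sigma}$ must be unital. Such a CPO therefore acts on Bloch vectors by a real matrix $T'$ with $\mathbf s_i=T'\mathbf r_i$ for every $i$. Because $R_J$ is invertible, the three equations $\mathbf s_j=T'\mathbf r_j$ for $j\in J$ force $T'=S_JR_J^{-1}=T$. The proof therefore reduces to a single claim: conversion holds if and only if (a) $\Phi_T$ is completely positive and (b) $T\mathbf r_k=\mathbf s_k$ for every $k>3$. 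If $\Phi_T$ is completely positive, it is unital and hence a CPO by Theorem~\ref{thm:qubit_cpo_classification}, which gives sufficiency. Necessity is exactly the uniqueness argument above. What remains is to show that, given the definition of $T$, condition (b) is equivalent to \eqref{eq:extra_state_dir}--\eqref{eq:extra_state_norm}.

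Since $\mathbf r_1,\mathbf r_2,\mathbf r_3$ form a basis, I expand each remaining vector as $\mathbf r_k=R_J\mathbf c_k$. Taking inner products with $\mathbf r_j$, $j\in J$, gives $G_{\boldsymbol\rho}[J,k]=G_{\boldsymbol\rho}[J,J]\mathbf c_k$. Hence $\mathbf c_k^{\mathsf T}=G_{\boldsymbol\rho}[k,J]G_{\boldsymbol\rho}[J,J]^{-1}=C_{\boldsymbol\rho}[k]$, using the symmetry of the Gram matrix. It follows that $T\mathbf r_k=S_J\mathbf c_k=S_JC_{\boldsymbol\rho}[k]^{\mathsf T}$, so condition (b) is equivalent to $\mathbf s_k=S_J\mathbf c_k$.

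For the forward direction, suppose $\mathbf s_k=S_J\mathbf c_k$. Then $S_J^{\mathsf T}\mathbf s_k=G_{\boldsymbol\sigma}[J,J]\mathbf c_k$, which is \eqref{eq:extra_state_dir}. Also $\|\mathbf s_k\|^2=\mathbf c_k^{\mathsf T}G_{\boldsymbol\sigma}[J,J]\mathbf c_k$, which is \eqref{eq:extra_state_norm}.

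The converse is the main step. Set $\mathbf w=\mathbf s_k-S_J\mathbf c_k$. Condition \eqref{eq:extra_state_dir} says $S_J^{\mathsf T}\mathbf w=0$, that is, $\mathbf w$ is orthogonal to every $\mathbf s_j$ with $j\in J$. Expanding the norm gives
\[
\|\mathbf w\|^2=\|\mathbf s_k\|^2-2\,\mathbf c_k^{\mathsf T}S_J^{\mathsf T}\mathbf s_k+\mathbf c_k^{\mathsf T}G_{\boldsymbol\sigma}[J,J]\mathbf c_k .
\]
Substituting \eqref{eq:extra_state_dir} into the middle term and \eqref{eq:extra_state_norm} into the first term, the three contributions cancel, so $\|\mathbf w\|^2=0$ and $\mathbf w=0$. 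This holds even when $S_J$ is singular: the orthogonality from \eqref{eq:extra_state_dir} turns the norm condition into a Pythagorean identity, and no invertibility of $S_J$ is used. I would present this computation carefully, because it is the only place where the Gram data determines the vectors rather than only their inner products, and it requires no orientation input beyond what complete positivity of $T$ already encodes.

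Finally, the complete positivity of $T$ can be read off from its signed singular values through Theorem~\ref{thm:gram_svd} and Theorem~\ref{thm:qubit_exact}. This is consistent with the order-three count of Corollary~\ref{cor:global_order}, although the statement as written needs only the complete positivity of $\Phi_T$ itself.
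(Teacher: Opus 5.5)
Your proposal is correct and follows essentially the same route as the paper: invertibility of $R_J$ forces the unique unital candidate $T=S_JR_J^{-1}$, you identify $R_J^{-1}\mathbf r_k=C_{\boldsymbol\rho}[k]^{\mathsf T}$, and for sufficiency you show $\mathbf w=\mathbf s_k-T\mathbf r_k$ vanishes because it is orthogonal to $\operatorname{range}(S_J)$ while the norm condition leaves it no room. Your direct expansion of $\|\mathbf w\|^2$ is the same computation as the paper's Pythagorean decomposition $\|\mathbf s_k\|^2=\|T\mathbf r_k\|^2+\|\mathbf w\|^2$, and, as in the paper, it nowhere uses invertibility of $S_J$.
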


The two consistency conditions involve only second-order Bargmann invariants. Thus, beyond the spanning triple, no third- or higher-order invariants involving the additional states are needed.


\begin{proof}
Since \(\boldsymbol{\rho}\succeq_{\SC}\boldsymbol{\sigma}\), the restriction to the first three states gives \((\rho_1,\rho_2,\rho_3)\succeq_{\SC}(\sigma_1,\sigma_2,\sigma_3)\). The realizing CPO is unital because \(\boldsymbol{\sigma}\) is resourceful. Hence its Bloch matrix satisfies \(TR_J=S_J\), and therefore \(T=S_JR_J^{-1}\).

For every $k>3$, we have $\mathbf{s}_k=T\mathbf{r}_k$. From $R_J^{\mathsf T}R_J=G_{\boldsymbol{\rho}}[J,J]$ and $R_J^{\mathsf T}\mathbf{r}_k=G_{\boldsymbol{\rho}}[J,k]$, we obtain $R_J^{-1}\mathbf{r}_k = G_{\boldsymbol{\rho}}[J,J]^{-1}
G_{\boldsymbol{\rho}}[J,k] =C_{\boldsymbol{\rho}}[k]^{\mathsf T}$
and $T\mathbf{r}_k=S_JR_J^{-1}\mathbf{r}_k
=S_JC_{\boldsymbol{\rho}}[k]^{\mathsf T}$.
Therefore,
\begin{eqnarray*}
    G_{\boldsymbol\sigma}[J,k]
    =S_J^{\mathsf T}\mathbf s_k=S_J^{\mathsf T}S_JR_J^{-1}\mathbf r_k=G_{\boldsymbol\sigma}[J,J]\,C_{\boldsymbol\rho}[k]^{\mathsf T}.
\end{eqnarray*}
And taking the squared norm gives~\eqref{eq:extra_state_norm}.
Thus both conditions hold.

For sufficiency, suppose the candidate unital map with Bloch matrix $T=S_JR_J^{-1}$ is completely positive and
that~\eqref{eq:extra_state_dir} and
\eqref{eq:extra_state_norm} hold for every $k>3$.
Since $TR_J=S_J$, this map sends the first three source
states to their respective targets.

Fix $k>3$ and write
$a_k=C_{\boldsymbol{\rho}}[k]^{\mathsf T}$.
The definition of $C_{\boldsymbol{\rho}}$ gives
$R_J^{-1}\mathbf r_k=a_k$, and hence $T\mathbf r_k=S_Ja_k\in\operatorname{range}(S_J)$. Set $\mathbf w=\mathbf s_k-T\mathbf r_k$. Using~\eqref{eq:extra_state_dir}, we obtain
\[
\begin{aligned}
    S_J^{\mathsf T}\mathbf w
    &=S_J^{\mathsf T}\mathbf s_k
    -S_J^{\mathsf T}S_Ja_k\\
    &=G_{\boldsymbol{\sigma}}[J,k]
    -G_{\boldsymbol{\sigma}}[J,J]a_k
    =0.
\end{aligned}
\]
Thus $\mathbf w\perp\operatorname{range}(S_J)$.
Since $T\mathbf r_k$ belongs to this range,
$\mathbf s_k=T\mathbf r_k+\mathbf w$ is an orthogonal
decomposition, so
\[
\|\mathbf s_k\|^2
=\|T\mathbf r_k\|^2+\|\mathbf w\|^2.
\]
Meanwhile,~\eqref{eq:extra_state_norm} gives
\[
\|\mathbf s_k\|^2
=G_{\boldsymbol{\sigma}}[k,k]
=a_k^{\mathsf T}G_{\boldsymbol{\sigma}}[J,J]a_k
=\|S_Ja_k\|^2
=\|T\mathbf r_k\|^2.
\]
Therefore $\|\mathbf w\|^2=0$, and $\mathbf s_k=T\mathbf r_k$.

This holds for every $k>3$, so the same channel converts the entire multistate. As a unital qubit channel, it preserves commutativity and
is therefore a CPO. The argument does not require $S_J$ to be invertible.
\end{proof}

For degenerate geometries where $\mathrm{span}\{\mathbf{r}_i\}$ has dimension 2 (planar) or dimension 1 (collinear):
\begin{itemize}
\item \textbf{Planar sector ($\operatorname{rank}=2$):}
For a resourceful target, any realizing CPO must be unital. Its Bloch action is therefore linear. A map satisfying $T\mathbf r_i=\mathbf s_i$ for every $i$ exists if and only if
\begin{equation}
    \label{eq:planar_consistency}
    \ker R\subseteq\ker S.
\end{equation}
Equivalently, every linear relation among the source Bloch
vectors must also hold among the target Bloch vectors.

Let $a \geqslant b \geqslant 0$ be the singular values of $T$ restricted to $\operatorname{span}\{\mathbf{r}_i\}$, and let $\tau \in \mathbb{R}$ be the signed coefficient in action \(T\mathbf n=\tau\mathbf n'\). Rotations in the source and target planes bring the constrained part to $\diag(a, b)$, so the extension has signed spectrum $\boldsymbol{\lambda} = (a, b, \tau)$ and complete positivity is~\eqref{eq:choi_spectrum} evaluated there. Of the four eigenvalues, $1 - a - b + \tau \geqslant 0$ and $1 - a + b - \tau \geqslant 0$ bind, which leaves an interval feasibility problem on $\tau$,
    \begin{equation*}
    a + b - 1 \ \leqslant\ \tau \ \leqslant\ 1 - a + b,
    \end{equation*}
    an interval that is non-empty exactly when $a \leqslant 1$. Here $a + b > 1$ is no obstruction: the free normal direction absorbs it, and when $a = b$ the isotropic extension $\tau = a$ stays admissible up to $a = 1$. Thus, a planar source can be converted to a resourceful target
    by a CPO if and only if $\ker R\subseteq\ker S$ and $a\leqslant1$.
    Free targets are covered by the preceding POVM result.
\item \textbf{Collinear sector ($\operatorname{rank} \leqslant 1$):} The states commute mutually, meaning the initial ensemble is free. Because CPO operations preserve commutativity, conversion into any resourceful target is strictly impossible.
\end{itemize}

\begin{table*}[t]
\centering
\caption{Closed-form content of the conversion criterion. \emph{Top:} the three qubit sectors, with $s_1 \geqslant s_2 \geqslant s_3$ the singular values of $T = SR^{-1}$, $\eta=\operatorname{sgn}\Big(\frac{\operatorname{Im}\Delta_3(\boldsymbol{\sigma})}{\operatorname{Im}\Delta_3(\boldsymbol{\rho})}\Big)$ its orientation, and shares taken in $\mathcal{S} = \{(s_1,s_2,s_3)~|~1 \geqslant s_1 \geqslant s_2 \geqslant s_3 \geqslant 0\}$. The two orientations disagree on $\mathcal{C}_{\mathrm{chir}}$ alone. \emph{Bottom:} the two rigid branches for $d \geqslant 3$, which differ by the conjugation in the transpose row.}
\label{tab:benchmarks}
\renewcommand{\arraystretch}{1.25}
\setlength{\tabcolsep}{8pt}
\sbox{\tabQ}{\begin{tabular}{@{}llccr@{}}
\toprule
\textbf{Sector} & \textbf{Condition} & $\eta = +1$ & $\eta = -1$ & \textbf{Share of $\mathcal{S}$} \\
\midrule
I & $s_1 + s_2 + s_3 \leqslant 1$ & \checkmark & \checkmark & $\frac16$ \\
$\mathcal{C}_{\mathrm{chir}}$ & $s_1 + s_2 + s_3 > 1 \geqslant s_1 + s_2 - s_3$ & \checkmark & $\boldsymbol{\times}$ & $\frac13$ \\
II & $s_1 + s_2 - s_3 > 1$ & $\boldsymbol{\times}$ & $\boldsymbol{\times}$ & $\frac12$ \\
\bottomrule
\end{tabular}}
\usebox{\tabQ}

\vspace{2.2ex}

\begin{tabular*}{\wd\tabQ}{@{\extracolsep{\fill}}lll@{}}
\toprule
\textbf{Branch} ($d \geqslant 3$) & \textbf{Invariant relation} & \textbf{Admissible $q$} \\
\midrule
U, unitary & $\widetilde{\Delta}_k(\boldsymbol{\sigma}) = q^k\, \widetilde{\Delta}_k(\boldsymbol{\rho})$ & $-\frac1{d^{2}-1}\leqslant q \leqslant 1$ \\
T, transpose & $\widetilde{\Delta}_k(\boldsymbol{\sigma}) = q^k\, \overline{\,\widetilde{\Delta}_k(\boldsymbol{\rho})\,}$ & $-\frac1{d-1} \leqslant q \leqslant \frac1{d+1}$ \\
\bottomrule
\end{tabular*}
\end{table*}

\section{Volume of the Chirality Shell}
\label{sec:benchmarks}

We now quantify how often chirality alone decides convertibility (Table~\ref{tab:benchmarks}). The three regions of Sec.~\ref{sec:qubit_conversion} partition the ordered singular-value simplex $\mathcal{S} = \{ (s_1, s_2, s_3) : 1 \geqslant s_1 \geqslant s_2 \geqslant s_3 \geqslant 0 \}$, whose volume is $\mathrm{Vol}(\mathcal{S}) = \frac16$. Their shares are rational and can be evaluated in closed form.

\begin{proposition}[Exact volume shares]
\label{prop:volumes}
With respect to the uniform measure on $\mathcal{S}$,
\begin{equation*}
\frac{\mathrm{Vol}(\mathrm{I})}{\mathrm{Vol}(\mathcal{S})} = \frac{1}{6}, \quad
\frac{\mathrm{Vol}(\mathrm{II})}{\mathrm{Vol}(\mathcal{S})} = \frac{1}{2}, \quad
\frac{\mathrm{Vol}(\mathcal{C}_{\mathrm{chir}})}{\mathrm{Vol}(\mathcal{S})} = \frac{1}{3}.
\end{equation*}
\end{proposition}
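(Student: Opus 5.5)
Since the three regions partition $\mathcal S$ and $\mathrm{Vol}(\mathcal S)=\tfrac16$, it is enough to compute the volumes of Region~I and Region~II directly. The shell's volume then follows by subtraction: $\mathrm{Vol}(\mathcal C_{\rm chir})=\tfrac16-\mathrm{Vol}(\mathrm I)-\mathrm{Vol}(\mathrm{II})$. For both computations I will use the symmetry of Lebesgue measure under permutations of coordinates. The cube $[0,1]^3$ splits into $6$ ordered chambers of equal volume, each congruent to $\mathcal S$. Any region in $\mathcal S$ defined by a \emph{symmetric} inequality therefore has volume equal to $\tfrac16$ of the volume of the corresponding region in the cube.

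\emph{Region~I.} The condition $s_1+s_2+s_3\leqslant1$ is symmetric. In the cube it cuts out the standard corner simplex, of volume $\tfrac16$. Restricted to $\mathcal S$, the region therefore has volume $\tfrac1{36}$, which gives the share $\tfrac{1/36}{1/6}=\tfrac16$. Equivalently, I can note that $\{s_1\geqslant s_2\geqslant s_3\geqslant0,\ \sum s_i\leqslant1\}$ already satisfies $s_1\leqslant1$ automatically.

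\emph{Region~II.} The condition $s_1+s_2-s_3>1$ is not symmetric, so I will integrate directly over $\mathcal S$. I fix $(s_1,s_2)$ with $1\geqslant s_1\geqslant s_2\geqslant0$. The variable $s_3$ must satisfy $0\leqslant s_3\leqslant s_2$ and $s_3<s_1+s_2-1$, so its allowed length is $\max\{0,\min(s_2,\,s_1+s_2-1)\}$. Since $s_1\leqslant1$, we have $s_1+s_2-1\leqslant s_2$. The length therefore reduces to $(s_1+s_2-1)_+$, which is positive only when $s_2>1-s_1$. Combined with $s_2\leqslant s_1$, this forces $s_1>\tfrac12$. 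This gives
\[
\mathrm{Vol}(\mathrm{II})=\int_{1/2}^{1}\int_{1-s_1}^{s_1}(s_1+s_2-1)\,ds_2\,ds_1 .
\]
The inner integral equals $\tfrac12(2s_1-1)^2$. The outer integral is then $\tfrac12\int_{1/2}^1(2s_1-1)^2ds_1=\tfrac1{12}$. The share of Region~II is therefore $\tfrac{1/12}{1/6}=\tfrac12$.

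The shell then gets $1-\tfrac16-\tfrac12=\tfrac13$. As a cross-check, I will verify that the two planes do not interfere, so that the partition really is disjoint and exhaustive. Because $s_3\geqslant0$, we have $s_1+s_2-s_3\leqslant s_1+s_2+s_3$, so Region~II is contained in $\{s_1+s_2+s_3>1\}$. The three conditions in Table~\ref{tab:benchmarks} thus partition $\mathcal S$ up to measure-zero boundaries. The only step needing care is the bookkeeping of the $s_3$-range in Region~II, namely showing that the cap $s_3\leqslant s_2$ never binds. As a check on the resulting fraction, the shell occupies $\tfrac{1/3}{1/6+1/3}=\tfrac23$ of the $\eta=+1$ admissible set, which matches the claim in the introduction.
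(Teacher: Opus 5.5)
Your proposal is correct and matches the paper's proof step for step. Region~I comes from the six-fold permutation symmetry of the standard simplex, giving $\tfrac{1}{36}$; Region~II comes from integrating $(s_1+s_2-1)_+$ over the ordered triangle, giving $\tfrac{1}{12}$; and the shell follows by subtraction. Your explicit check that the three conditions partition $\mathcal S$ is a small addition the paper leaves implicit.
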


\begin{proof}
Let $D=\{(s_1,s_2,s_3):s_i\geqslant0,\ s_1+s_2+s_3\leqslant1\}$. This is the standard simplex, with volume $1/6$. Region I is the part of $D$ satisfying $s_1\geqslant s_2\geqslant s_3$. Permuting the three coordinates divides $D$ into six parts of equal volume. Therefore, the volume of Region I is
$$
\operatorname{Vol}(\mathrm I)=\frac{\operatorname{Vol}(D)}{6}
=\frac{1}{36}.
$$

For Region~II, the constraint $s_3 < s_1 + s_2 - 1$ together with $s_1 \leqslant 1$ gives $s_3 < s_1+s_2-1 \leqslant s_2$, so
\begin{align*}
\mathrm{Vol}(\mathrm{II}) &= \int_0^1\!\!\int_0^{s_1} (s_1+s_2-1)_+ {\rm d}s_2{\rm d}s_1 \\
&= \int_{\frac12}^{1} \frac{(2s_1-1)^2}{2}{\rm d}s_1 = \frac{1}{12}.
\end{align*}
The shell is the complement, $\mathrm{Vol}(\mathcal{C}_{\mathrm{chir}}) = \frac16 - \frac1{36} - \frac1{12} = \frac1{18}$. Dividing by $\mathrm{Vol}(\mathcal{S}) = \frac16$ gives the stated shares.
\end{proof}

A transformation is realizable in the orientation-preserving direction exactly on $\mathrm{I} \cup \mathcal{C}_{\mathrm{chir}}$, which fills half of $\mathcal{S}$. Reflecting it obstructs it exactly on $\mathcal{C}_{\mathrm{chir}}$, so
\begin{equation*}
P_{\mathrm{obs}} = \frac{\mathrm{Vol}(\mathcal{C}_{\mathrm{chir}})}{\mathrm{Vol}(\mathrm{I} \cup \mathcal{C}_{\mathrm{chir}})} = \frac{1/3}{1/2} = \frac{2}{3}.
\end{equation*}
Two thirds of the orientation-preserving admissible region, by relative Lebesgue volume, becomes impossible under reflection, although reflection preserves every pairwise inner product and distance.

\section{Discussion}
\label{sec:discussion}

Detecting set coherence and deciding its conversion can require different information. For qubits, detection uses unoriented pairwise data, whereas conversion can depend on the orientation recorded by a third-order Bargmann invariant.

This orientation dependence arises from complete positivity. A quantum channel must have a positive semidefinite Choi matrix, which restricts the admissible Bloch transformations. For
diagonal unital qubit maps, the admissible region is the Pauli tetrahedron, whose vertices correspond to the four Pauli unitary channels. The universal inversion $\mathbf r\mapsto-\mathbf r$ is unitarily equivalent to transposition, which acts as $(x,y,z)\mapsto(x,-y,z)$. Both maps are positive but not completely positive. Consequently, a mirror target can have the same pairwise invariants as a reachable target while failing the complete-positivity criterion.

Qubits and qudits with $d\geqslant3$ differ in this respect. In higher dimensions, CPOs with non-commutative range have only one scaling parameter, up to unitary conjugation or transposition. The anisotropic transformations available to qubit CPOs are therefore a low-dimensional exception within this class of operations.

Where the classical boundary falls for an individual state or process is under active revision~\cite{Cobucci2026,ZhangS2026,ZhangY2026}. A multistate carries an orientation that no single-state quantifier resolves, and the results here locate it at degree three.

The criterion also bears on the sensitivity of coherence to dephasing and on the trade-off between coherence and mixedness under noise~\cite{Yadin2019,Che2023}. Noise contracts states in state space, and Theorem~\ref{thm:qubit_exact} says which contracted multistate configurations stay reachable without creating relational coherence that was not there to begin with.

\section{Conclusion}
\label{sec:conclusion}

We have developed an exact resource-conversion theory for
quantum multistates under common commutativity-preserving
operations. Three results follow.

\begin{enumerate}
    \item \textbf{Detection and conversion require different
        orders.} For qubit multistates,
    $k_{\mathrm{detect}}(n)=2$ for every $n\geqslant2$.
    Conversion requires order two for pairs and order three
    for every $n\geqslant3$.

    \item \textbf{The chirality-sensitive shell.}
    On $\mathcal{C}_{\mathrm{chir}}$, orientation alone
    distinguishes realizable from forbidden conversions
    with the same singular values. The shell occupies one
    third of the ordered singular-value simplex and two
    thirds of the volume admissible at $\eta=+1$.

    \item \textbf{Rigidity above $d=2$.}
    For $d\geqslant3$, conversions to resourceful targets
    follow an isotropic depolarizing or
    transpose-depolarizing branch. Centered Bargmann
    invariants of degree at most $d^2$ decide
    convertibility.
\end{enumerate}

The same question can be studied for other classes of free
operations and for multistates distributed between parties,
where orientation data need not reduce to a single scalar sign.

\begin{acknowledgments}
The authors designed the study, carried out the mathematical analysis, and prepared the manuscript. Generative-AI tools were used for figure preparation and language refinement. All mathematical statements were independently reviewed and verified by the authors, who take full responsibility for the content.
\end{acknowledgments}

\section*{Data availability}
No data were created or analyzed in this study.


\appendix

%
%

\section{Proof of Theorem~\ref{thm:rigidity}}
\label{app:rigidity}

The two-branch classification is due to Hu, Fan, Zhou and Liu for $d = 3$ and to Guo and Hou for all $d \geqslant 3$~\cite{Hu2012,Guo2013}, and we take it as given: a commutativity-preserving channel on $\mathbb{C}^d$ with $d \geqslant 3$ whose range is non-commutative has one of the two forms~\eqref{eq:ubranch_form} and~\eqref{eq:tbranch_form}, with $U$ unitary and $q$ real. Both are trace-preserving by construction, so what remains is to read the admissible $q$ off complete positivity. This also fixes the normalisation used in Sec.~\ref{sec:high_d}.

The Choi-Jamio{\l}kowski matrix $J_\Phi = (\Phi \otimes \mathrm{id})(|\Omega\rangle\langle\Omega|)$, where $|\Omega\rangle = \tfrac{1}{\sqrt{d}}\sum_{i=1}^d |i i\rangle$, has eigenvalues that depend linearly on $q$:
\begin{itemize}
    \item For the unitary branch (U-branch), $J_\Phi$ has an eigenvalue of multiplicity 1 equal to $\frac{1 + (d^2-1)q}{d^2}$, and an eigenvalue of multiplicity $d^2-1$ equal to $\frac{1-q}{d^2}$. Enforcing $J_\Phi \geqslant 0$ yields $-\frac{1}{d^2-1} \leqslant q \leqslant 1$.
    \item For the transpose branch (T-branch), $J_\Phi$ acts as the depolarized partial transpose, with eigenvalue $\frac{1 + (d-1)q}{d^2}$ of multiplicity $\frac{d(d+1)}2$ on the symmetric subspace and eigenvalue $\frac{1 - (d+1)q}{d^2}$ of multiplicity $\frac{d(d-1)}2$ on the antisymmetric subspace. Enforcing $J_\Phi \geqslant 0$ yields $-\frac{1}{d-1} \leqslant q \leqslant \frac{1}{d+1}$.
\end{itemize}
This completes the proof. \qed


\end{document}